\newtheorem{lemma}{Lemma}[section]
\newtheorem{theorem}{Theorem}[section]
\newtheorem{corollary}{Corollary}[section]
\newcommand{\bra}[1]{{\left\langle{#1}\right\vert}}
\newcommand{\ket}[1]{{\left\vert{#1}\right\rangle}}
\newclass{\PDQP}{PDQP}
\newclass{\DQPST}{DQP_{\mathcal{ST}}}
\newclass{\CDQP}{CDQP}
\newcommand\blank{{\mkern 2mu\cdot\mkern 2mu}}
\renewcommand{\bra}[1]{\left\langle #1 \right \vert}
\renewcommand{\ket}[1]{\left\vert #1 \right \rangle}
\newcommand{\braket}[2]{\left\langle #1 \middle \vert #2 \right\rangle}
\newcommand{\norm}[1]{\left\| #1 \right\|}
\title{The space ``just above" $\BQP$}
\author[1]{Scott Aaronson\thanks{email: aaronson@csail.mit.edu}}
\author[1]{Adam Bouland\thanks{email: adam@csail.mit.edu}}
\author[2]{Joseph Fitzsimons\thanks{email: joe.fitzsimons@nus.edu.sg}}
\author[1]{Mitchell Lee\thanks{email: mitchlee@mit.edu}}
\affil[1]{Massachusetts Institute of Technology, Cambridge, MA USA}
\affil[2]{Singapore University of Technology and Design and Centre for Quantum Technologies, National University of Singapore, Singapore}
\date{\vspace{-5ex}}
\begin{document}

\clearpage
  \maketitle
\begin{abstract}
 
 We explore the space ``just above" $\BQP$ by defining a complexity class $\PDQP$ (Product Dynamical Quantum Polynomial time) which is larger than $\BQP$ but does not contain $\NP$ relative to an oracle. The class is defined by imagining that quantum computers can perform measurements that do not collapse the wavefunction. This (non-physical) model of computation can efficiently solve problems such as Graph Isomorphism and Approximate Shortest Vector which are believed to be intractable for quantum computers. Furthermore, it can search an unstructured $N$-element list in $\tilde O(N^{1/3})$ time, but no faster than $\Omega(N^{1/4})$, and hence cannot solve $\NP$-hard problems in a black box manner. In short, this model of computation is more powerful than standard quantum computation, but only slightly so. 
 
 Our work is inspired by previous work of Aaronson on the power of sampling the histories of hidden variables. However Aaronson's work contains an error in its proof of the lower bound for search, and hence it is unclear whether or not his model allows for search in logarithmic time. Our work can be viewed as a conceptual simplification of Aaronson's approach, with a provable polynomial lower bound for search.

\end{abstract}


\section{Introduction}
\par Quantum computers are believed to be strictly more powerful than classical computers, but not so much more powerful that they can solve $\NP$-hard problems efficiently. In particular, it is known that $\BQP$, the class of languages recognizable in polynomial time by a quantum algorithm~\cite{bqp}, does not contain $\NP$ ``relative to an oracle.'' This means that there is some ``black box" problem $\mathcal{O}$ for which $\BQP^{\mathcal{O}} \not \supset \NP^{\mathcal{O}}$. (For more information about the terminology, see~\cite[pp. 72-76]{complexity}.) On the other hand, many seemingly innocuous modifications of quantum mechanics---for example, allowing nonlinear transformations \cite{nonlinear}, non-unitary transformations, postselection, or measurement statistics based on the $p$th power of the amplitudes for $p \neq 2$---increase the power of quantum computation drastically enough that they can solve $\NP$-hard problems (and even $\#\P$-hard problems) efficiently~\cite{postbqp}. As a result, it is difficult to find natural complexity classes which are bigger than $\BQP$ but which don't contain $\NP$. Quantum mechanics appears to be an ``island in theoryspace" in terms of its complexity-theoretic properties~\cite{postbqp}.

\par In this work, we explore a natural modification of quantum mechanics to obtain a complexity class $\PDQP$ which lies ``just above" $\BQP$, i.e. it contains $\BQP$, it strictly contains $\BQP$ relative to an oracle, but it still does not contain $\NP$ relative to an oracle. Our model is defined by imagining one could perform measurements which do not collapse the state, in addition to the usual projective measurements (which collapse the state). We show that quantum computers equipped with this power can solve the Graph Isomorphism problem in polynomial time, yet require $\Omega(N^{1/4})$ time to search an unordered list of $N$ elements. To our knowledge this represents the only known modification of quantum mechanics which provably does not admit polynomial time black-box algorithms for $\NP$-hard problems.

\par Our work is inspired by previous work on quantum computing with hidden variables by Aaronson~\cite{hidden}. His work defines a class $\DQP$ by imagining a hidden variable theory is true, and that an experimenter can view the evolution of the hidden variables in real time. He shows that with this power one can search in $\tilde O(N^{1/3})$ time and solve any problem in $\SZK$ in polynomial time. He additionally claims one cannot search in faster than $\Omega(N^{1/3})$ time in this model. Unfortunately, there is an error which invalidates his proof of the lower bound for search. For the interested reader, we describe this error in Appendix~\ref{errors} and correct the error for a modified version of the computational model in Appendix~\ref{app:dqpfix}. Proving the lower bound for search under Aaronson's original computational model is challenging because we have few examples of working hidden variable theories, and therefore have little understanding of how hidden variable values could correlate over time. Note, however, that an $\Omega(N^{1/3})$ lower bound for search might hold even for Aaronson's original model.

\par In this work, we jettison the machinery of hidden variable theories and instead consider the power of quantum computers which can make both usual quantum measurements and ``non-collapsing measurements." These are identical to usual quantum measurements except that they do not collapse the state. Non-collapsing measurements in some sense capture the power of hidden variable theories which are used in Aaronson's paper, while being simpler to analyze. We call the class of problems decidable in polynomial time in this model $\PDQP$, which stands for ``Product Dynamical Quantum Polynomial time"\footnote{The name comes from the fact that $\PDQP$ can be viewed as a version of $\DQP$ in which the hidden variable dynamics are governed by ``product theory", i.e. they reproduce the results of non-collapsing measurements \cite{hidden}.}. Like $\DQP$, we show that the class $\PDQP$ contains both $\SZK$ and $\BQP$, so there is an oracle $\mathcal{O}$ for which $\BQP^\mathcal{O} \neq \PDQP^\mathcal{O}$. Furthermore, we show a strong classical upper bound for $\PDQP$, namely that $\PDQP \subseteq \BPP^\PP$. In contrast the best known classical upper bound for $\DQP$ is $\EXP$ \cite{hidden}.

\par We also demonstrate that if non-collapsing measurements are possible, then there is a quantum algorithm that searches an unstructured list of $N$ elements in $\tilde O(N^{1/3})$ time, and furthermore any such algorithm takes at least $\Omega(N^{1/4})$ time. While the upper bound is simple, the proof of the lower bound uses a hybrid argument \cite{hybrid} and properties of Markov chains. We conclude that $\PDQP$ does not contain $\NP$ relative to an oracle. Therefore $\PDQP$ lies ``just above $\BQP$", while being easier to define than $\DQP$. In short, allowing non-collapsing measurements does not drastically increase the power of quantum computers, unlike many other modifications of quantum mechanics \cite{postbqp, nonlinear}.

\par Note that introducing non-collapsing measurements into quantum mechanics allows for many strange phenomena. In particular, it allows for faster-than-light communication, it allows for quantum cloning, and it renders quantum query complexity and quantum communication complexity meaningless (see Appendix ~\ref{bizarreqm} for details). As a result, we are not suggesting that ``non-collapsing measurements" should be considered seriously as an amendment to quantum theory; rather we are simply showing that they have interesting complexity-theoretic properties.

\section{Quantum computing with noncollapsing measurements}

We assume the reader is familiar with the standard definition of $\BQP$ and the basics of quantum computing; for an introduction to this topic see \cite{quantum}. We now give a formal definition of our model of quantum computing with non-collapsing measurements.

\par Let $\mathcal{Q}_P$ be an oracle that takes as input a quantum circuit $C = (U_1, M_1, U_2, M_2, \cdots, U_T, M_T)$ and an integer $\ell \geq 0$. Here each $U_i$ is a unitary operator on $\ell$ qubits composed of gates from some finite universal gate set $\mathcal{U}$, and each $M_i$ is a standard (collapsing) measurement of zero or more qubits in the computational basis. Define a (random) sequence $\{\ket{\psi_t}\}_{t=0}^{T}$ of quantum states by $\ket{\psi_0} = \ket{0}^{\otimes \ell} $ and for $t > 0$, $\ket{\psi_t}$ is the resulting (random) pure state obtained when measurement $M_t$ is applied to $U_t \ket{\psi_{t-1}}$. \ Note that we imagine the state of the system $\ket{\psi_t}$ is a (random) pure state for $0 \leq t \leq T$. \ The oracle $\mathcal{Q}_P$ samples the sequence $\{\ket{\psi_t}\}_{t=0}^{T}$ (note that the random variables $\ket{\psi_t}$ are not independent), measures $\ket{\psi_t}$ in the computational basis for every $t$ independently, and outputs the $T+1$ measurement results, which we label $v_0,v_1,\ldots v_T$, respectively. The output of $\mathcal{Q}_P$ is an element of $(\{0,1\}^\ell)^{T+1}$. Note that once the $\ket{\psi_t}$ are fixed, the $T+1$ measurement results are independent, however since the $\ket{\psi_t}$ are correlated, the measurement outcomes may be correlated.

\par $\PDQP$ (Product Dynamical Quantum Polynomial-time) is then defined as the class of all languages that can be recognized in polynomial time by a deterministic Turing machine with one query to $\mathcal{Q}_P$, with error probability at most $\frac{1}{3}$. Note that because the base machine is polynomially bounded, the circuit $C$ with which it queries $\mathcal{Q}_P$ must be polynomially sized. This class contains $\BQP$, because one can always query the oracle $\mathcal{Q}_P$ with a $\BQP$ circuit, and then ignore all output except the final measurement outcome. The constant $\frac{1}{3}$ is arbitrary: we can decrease the error probability arbitrarily close to $0$ by repetition, which can be accomplished by packing multiple copies of a quantum circuit into a single call to $\mathcal{Q}_P$. Furthermore, it turns out that the definition of $\PDQP$ is not affected by the choice of universal gate set $\mathcal{U}$; this is a consequence of the Solovay-Kitaev Theorem. See Appendix~\ref{gateset} for details.

We can think of the $T+1$ measurement samples from $\mathcal{Q}_P$ as the results of \emph{non-collapsing} measurements on the state vector, which give information about the state without changing it. For instance, let $\ket{\psi_1}=U_1 \ket{0}^{\otimes \ell}$, let $M_1, M_2, M_3$ be empty measurements, and let $U_2,U_3$ be the identity. Then the oracle $\mathcal{Q}_P$ will output the result of three independent non-collapsing measurements of $\ket{\psi_1}$ in the computational basis. The key point is that the oracle's samples do not disturb the state of the system; only the unitary operators $U_i$ and collapsing measurements $M_i$ do. The oracle $\mathcal{Q}_P$ gives us information about the intermediate stages of the quantum computation without collapsing the state; this is what gives $\PDQP$ additional power over $\BQP$.




Note that we explicitly allow for intermediate (collapsing) measurements in our model. In the definition of $\BQP$, the principle of deferred measurement tells us that this is not necessary; the power of standard quantum computers is unchanged by the inclusion of intermediate collapsing measurements. However, in our model this makes a crucial difference. Indeed, suppose that we did not allow for intermediate collapsing measurements; then this model would be simulable in $\BQP$ with a polynomial amount of overhead. If there are no intermediate measurements $M_i$, then $\ket{\psi_t}=U_tU_{t-1}\ldots U_1 \ket{0}^{\otimes \ell}$ are no longer random variables but are deterministic pure states, each preparable with a polynomially sized quantum circuit. So a $\BQP$ machine could simply prepare $\ket{\psi_1}$ and measure it, then prepare $\ket{\psi_2}$ from scratch and measure it, etc. to obtain the samples $v_0, \ldots, v_T$. This would incur at most quadratic overhead.

When we add intermediate measurements into our model, this simulation strategy no longer works. Indeed, suppose that we performed measurement $M_1$ to obtain a random state $\ket{\psi_1}$. If we wanted to reproduce this state with a $\BQP$ machine, we could try applying $M_1$ to $U_1 \ket{0}^{\otimes \ell}$. However, it might be that the probability of obtaining the same outcome for $M_1$ is exponentially small, and hence the $\BQP$ machine could not prepare another copy of $\ket{\psi_1}$ in polynomial time. 

In short, the power of this model comes from the fact that we can perform intermediate measurements which collapse the wave function, and afterwards we can examine the resulting pure state $\ket{\psi_t}$ (which might not be efficiently preparable with a $\BQP$ machine) using multiple non-collapsing measurements. In the next section we will show how to leverage these properties to solve any problem in $\SZK$ in polynomial time.

\section{$\SZK \subseteq \PDQP$}
\par We will now describe how to use the peculiarties of non-collapsing measurements to solve any problem in $\SZK$ in polynomial time. The proof uses essentially the ideas of Aaronson \cite{hidden}, with minor simplifications.

$\SZK$ was originally defined as the class of languages admitting statistical zero-knowledge proofs. The precise definition of a statistical zero-knoweledge proof can be found in~\cite{szk}, but it is not important here. $\SZK$ includes important problems such as Graph Isomorphism and Approximate Shortest Vector. It has been a long-standing open problem whether or not these problems can be solved in quantum polynomial time. Ettinger, H\o yer and Knill showed that Graph Isomorphism (and indeed any hidden subgroup problem) can be solved in a black box manner with a polynomial number of queries to the black box, but with exponential post-processing time \cite{graphisoquery}. On the other hand, Aaronson \cite{collision} showed that $\BQP$ does not admit a black-box algorithm for the collision problem, and hence there is an oracle relative to which $\SZK$ is not in $\BQP$.

\par In contrast, we show that quantum computers with non-collapsing measurements can solve any problem in $\SZK$ efficiently, i.e. $\SZK \subseteq \PDQP$. It is enough to prove that Statistical Difference, a problem shown in~\cite{szk} to be $\SZK$-complete, is in $\PDQP$. The statistical difference problem is to determine, for two functions $P_0, P_1 : \{0, 1\}^n \to \{0, 1\}^m$ specified by classical circuits, whether the distributions of $P_0(X), P_1(X)$ for uniformly random $X$ are close or far. Here, two distributions are ``close'' if their total variation distance is less than $\frac{1}{3}$ and they are ``far'' if their total variation distance is more than $\frac{2}{3}$.

We now show how to solve this efficiently if we have access to non-collapsing measurements.

\begin{theorem} The Statistical Difference problem can be solved in polynomial time in $\PDQP$, and hence $\SZK \subseteq \PDQP$.
\end{theorem}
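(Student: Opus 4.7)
The plan is to reduce Statistical Difference to the task of distinguishing biased from unbiased coin flips, using a non-collapsing measurement on the ``origin'' register of the standard $\SZK$-swap state.

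First I would have the $\PDQP$ machine prepare the $(n+m+1)$-qubit uniform superposition
\[
\ket{\psi} \;=\; \frac{1}{\sqrt{2^{n+1}}} \sum_{b \in \{0,1\}} \sum_{x \in \{0,1\}^n} \ket{b}\ket{x}\ket{P_b(x)},
\]
which is efficiently implementable from the classical circuits for $P_0$ and $P_1$. Then I would apply a standard (collapsing) measurement $M_1$ to the third register, obtaining an outcome $y$ distributed according to $q(y) := \tfrac{1}{2}(p_0(y)+p_1(y))$, where $p_b$ is the output distribution of $P_b$ on uniform input. Conditioned on $y$, the residual state on the first two registers is a uniform superposition over the set $\{(b,x) : P_b(x)=y\}$, so the first qubit is in a state with $\Pr[b=0 \mid y] = p_0(y)/(p_0(y)+p_1(y))$.

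Next I would use the non-collapsing power of $\mathcal{Q}_P$: after $M_1$, apply identity unitaries $U_2=U_3=\cdots=U_T=I$ with empty intermediate measurements, so that each sample $v_t$ (for $t \geq 1$) contains an independent computational-basis measurement of the first qubit of the same post-collapse state. This gives polynomially many independent Bernoulli samples with bias $\Pr[b=0\mid y]$ per run, from which the empirical bias $\hat{\beta}(y) \approx |2\Pr[b=0\mid y] - 1|$ can be estimated to inverse-polynomial accuracy. The key identity is
\[
\mathbb{E}_{y \sim q}\!\left[\,|\,2\Pr[b=0\mid y] - 1\,|\right] \;=\; \sum_y \tfrac{p_0(y)+p_1(y)}{2}\cdot \frac{|p_0(y)-p_1(y)|}{p_0(y)+p_1(y)} \;=\; \operatorname{TVD}(p_0,p_1),
\]
so this expectation is at most $1/3$ in the ``close'' case and at least $2/3$ in the ``far'' case.

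To turn the expectation into a decision, I would pack polynomially many independent copies of the above subroutine into a single query to $\mathcal{Q}_P$ (using disjoint fresh registers for each copy, with the identity acting elsewhere), and average the per-$y$ bias estimates. By Hoeffding/Chernoff bounds, the two cases can be separated at threshold $1/2$ with error probability below $1/3$, which puts Statistical Difference---and therefore all of $\SZK$---in $\PDQP$. The only subtlety I anticipate is bookkeeping: one must verify that running many subcircuits in parallel inside a single $\mathcal{Q}_P$ call truly yields the claimed joint distribution of non-collapsing samples, which follows because the global state factorizes across the disjoint register blocks, so the per-block non-collapsing measurements are independent conditional on the collapsing measurement outcomes.
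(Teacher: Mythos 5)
Your proposal is correct, but it takes a genuinely different route from the paper. The paper first invokes the Polarization Lemma of Sahai--Vadhan to drive the total variation distance exponentially close to $0$ or $1$; after the collapsing measurement of the third register, the first qubit is then either (essentially) a fixed classical bit or an unbiased coin, so \emph{three} non-collapsing measurements and an ``are they all equal?'' test suffice. You skip polarization entirely and instead estimate the total variation distance directly, via the identity $\mathbb{E}_{y\sim q}\bigl[\,\lvert 2\Pr[b=0\mid y]-1\rvert\,\bigr]=d_{TV}(p_0,p_1)$, using many non-collapsing Bernoulli samples of the first qubit per collapsed run and averaging over polynomially many independent runs packed into one $\mathcal{Q}_P$ query. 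What the paper's route buys is simplicity of the quantum statistics (a constant number of non-collapsing measurements and a trivial error analysis), at the price of importing the Polarization Lemma; what your route buys is self-containedness and the stronger conclusion that $\PDQP$ can actually \emph{estimate} the statistical distance to within any inverse-polynomial, not merely decide the promise problem. Two small points you should make explicit: (i) the plug-in estimator $\lvert 2\hat{p}-1\rvert$ is upward biased by $O(1/\sqrt{k})$ (Jensen), so you need $k=\operatorname{poly}(n)$ samples per run and a remark that this bias is absorbed into the $1/3$-vs-$2/3$ gap before thresholding at $1/2$; and (ii) your parallel-packing argument is sound precisely because the paper's model stipulates that, conditioned on the realized post-collapse states, the non-collapsing samples at distinct timesteps are independent, and the product structure across disjoint register blocks makes the per-block outcomes within a single sample independent as well --- you noted this, and it is the only place where the model's definition does real work.
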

\begin{proof}

\par By the Polarization Lemma of Sahai and Vadhan~\cite[Lemma~3.3]{szk}, we can assume that the distributions $P_0(X)$ and $P_1(X)$ have total variation distance less than $2^{-n^{c}}$ or more than $1-2^{-n^{c}}$, for any constant $c$. For now, assume that the distributions have total variation distance equal to either $1$ or $0$. 
\par Our algorithm for the statistical difference problem is as follows. Prepare the state \[\frac{1}{2^{(n+1)/2}}\sum_{b \in \{0, 1\}, x \in \{0, 1\}^n} \ket{b} \ket{x} \ket{P_b(x)}.\] Now, measure the third register with a collapsing measurement to obtain a state $\ket{\phi}$ on the first two registers. If the distributions $P_0, P_1$ have total variation distance $1$, then $\ket{\phi}$ will be of the form $\ket{b}\ket{\psi}$ for some $b$ and $\ket{\psi}$. On the other hand, if they have total variation distance $0$, then $\ket{\phi}$ will be an equal superposition $\frac{1}{\sqrt{2}}(\ket{0} \ket{\psi_0} + \ket{1} \ket{\psi_1})$ where $\ket{\psi_1}$ and $\ket{\psi_2}$ have unit norm. We can distinguish the two cases by now repeatedly performing non-collapsing measurements and examining the value of the first register. If $P_0, P_1$ have total variation distance $1$, then all of these measurements will give the same value $b$; if $P_0$ and $P_1$ have total variation distance $0$, then each of these measurements will independently give $0$ with probability $\frac{1}{2}$ and $1$ with probability $\frac{1}{2}$. We can distinguish the two cases with probability $3/4$ by performing three non-collapsing measurements and looking at whether or not they yielded identical values of the first register. 

Furthermore, the fact that the total variation distances are merely exponentially close to $0$ or $1$, rather than actually being equal to $0$ or $1$, makes little difference. One can easily show that the probability of seeing the same measurement outcome three times is at most $\frac{1}{4}+O(2^{-n^c})$ if $P_0$ and $P_1$ are exponentially close and at least $1-O(2^{-n^c})$ if $P_0$ and $P_1$ are exponentially far apart. Therefore our algorithm will have error probability at most 1/3.


\end{proof}
Hence $\SZK$ is in $\PDQP$, and furthermore we can solve $\SZK$ problems in $\PDQP$ in a black box manner, i.e. relative to any oracle. Since~\cite{collision} has the result that $\SZK \not \subset \BQP$ relative to an oracle, we have the immediate corollary\footnote{Note that when we say $\PDQP^{\mathcal{O}}$, we mean that circuits given in the input to $\mathcal{Q}_P$ in the definition of $\PDQP$ can contain quantum calls to the oracle.}:

\begin{corollary} There exists an oracle $\mathcal{O}$ such that $\PDQP^\mathcal{O} \neq \BQP^\mathcal{O}$.
\end{corollary}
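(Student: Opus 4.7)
The plan is to combine the theorem just proved with Aaronson's oracle separation between $\SZK$ and $\BQP$ from the collision problem~\cite{collision}. The two ingredients needed are (i) that the containment $\SZK \subseteq \PDQP$ relativizes, so that $\SZK^{\mathcal{O}} \subseteq \PDQP^{\mathcal{O}}$ for every oracle $\mathcal{O}$, and (ii) that there is an oracle $\mathcal{O}$ such that $\SZK^{\mathcal{O}} \not\subseteq \BQP^{\mathcal{O}}$. Together these immediately yield a language in $\PDQP^{\mathcal{O}}\setminus \BQP^{\mathcal{O}}$.

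First I would verify relativization. The algorithm in the proof of the theorem only uses the circuits $P_0,P_1$ for Statistical Difference as subroutines to coherently compute $\ket{b}\ket{x}\ket{P_b(x)}$; if $P_0,P_1$ are replaced by oracle circuits making quantum queries to $\mathcal{O}$, the exact same construction goes through, because by the definition of $\PDQP^{\mathcal{O}}$ the circuits fed to $\mathcal{Q}_P$ are allowed to contain oracle gates. The subsequent collapsing measurement of the third register and the three non-collapsing measurements of the first register work verbatim. Hence the Polarization Lemma reduction combined with this $\PDQP$ procedure relativizes, giving $\SZK^{\mathcal{O}} \subseteq \PDQP^{\mathcal{O}}$.

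Next I would invoke~\cite{collision}: Aaronson exhibits an oracle $\mathcal{O}$ relative to which the collision problem requires super-polynomially many quantum queries, yet the collision problem is known to lie in $\SZK^{\mathcal{O}}$ (it has a standard statistical zero-knowledge proof that relativizes). Thus $\SZK^{\mathcal{O}} \not\subseteq \BQP^{\mathcal{O}}$ for this $\mathcal{O}$. Combining with step (i), there is a language $L \in \SZK^{\mathcal{O}} \subseteq \PDQP^{\mathcal{O}}$ with $L \notin \BQP^{\mathcal{O}}$, which is exactly the desired separation.

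There is essentially no obstacle here beyond noting that both ingredients relativize; the only thing worth being careful about is that in defining $\PDQP^{\mathcal{O}}$ we genuinely allow oracle queries \emph{inside} the circuits submitted to $\mathcal{Q}_P$ (as the excerpt's footnote explicitly stipulates), so the $\SZK^{\mathcal{O}}$ algorithm can be executed by a single call to the relativized oracle $\mathcal{Q}_P^{\mathcal{O}}$.
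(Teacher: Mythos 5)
Your proposal is correct and matches the paper's argument: the paper likewise notes that the $\SZK \subseteq \PDQP$ containment is black-box (the circuits given to $\mathcal{Q}_P$ may contain oracle gates) and then invokes Aaronson's collision lower bound~\cite{collision}, which gives an oracle relative to which $\SZK \not\subset \BQP$, to conclude the separation. No differences worth noting.
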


\section{Search in $\tilde O(N^{1/3})$ time}\label{sec:search13}
\par Suppose that we are given query access to a function $f:\{0, 1\}^n \to \{0, 1\}$ such that the preimage $f^{-1}(1)$ contains exactly one element, $x$. In the classical randomized computational model, we can find $x$ in $O(N)$ time, where $N = 2^{n}$, but no faster. In the quantum computational model, on the other hand, we can find $x$ in $O(N^{1/2})$ time using Grover's search algorithm~\cite{grover}, but no faster \cite{hybrid}. 

Here we show that quantum computers equipped with non-collapsing measurements can search in $\tilde O (N^{1/3})$ time, where the tilde hides factors in $\log N$. The basic idea is to run $N^{1/3}$ Grover iterations, and then make $N^{1/3}$ non-collapsing measurements of the resulting state. Then with high probability the the marked item will be seen. This is a simplification of the proof given in ~\cite[Theorem~10]{hidden} for $\DQP$. We now formalize this idea below:

\begin{theorem}
 Suppose, in the definition of $\PDQP$, that the unitary operators $U_1, \cdots, U_T$ are now allowed to query $f$. That is, we are given access to the $n$-qubit gate $U_f$ defined by $U_f \ket{y} = (-1)^{f(y)} \ket{y}$ for all $y \in \{0,1\}^n$, as well as controlled-$U_f$. Then there is a $\PDQP$ algorithm to find the value of $x$ that uses $O(N^{1/3})$ queries and $\tilde O(N^{1/3})$ time.
\end{theorem}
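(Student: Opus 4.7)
The plan is to truncate Grover's algorithm early and compensate by collecting many non-collapsing measurements of the resulting state. Specifically, I would run only $k = \Theta(N^{1/3})$ Grover iterations (instead of the optimal $\Theta(\sqrt{N})$), so that the amplitude on the marked item $x$ is only $\Theta(N^{-1/6})$; then I would harvest $\Theta(N^{1/3})$ independent non-collapsing computational-basis samples of that state. Each sample returns $x$ with probability $\Theta(N^{-1/3})$, so a constant fraction of the time at least one of the samples equals $x$, and the marked item can then be verified classically.

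Concretely, I would feed $\mathcal{Q}_P$ a circuit $(U_1, M_1, \ldots, U_T, M_T)$ with $T = k + r$ for $k = \lceil c_1 N^{1/3} \rceil$ and $r = \lceil c_2 N^{1/3} \rceil$. Here $U_1 = H^{\otimes n}$ prepares the uniform superposition, $U_2, \ldots, U_{k+1}$ each implement one Grover iteration $G$ built from $U_f$ and controlled-$U_f$, and $U_{k+2}, \ldots, U_T$ are all the identity. All of the $M_t$ are empty. Since no collapsing measurement is applied, the states $\ket{\psi_t}$ are deterministic, and for every $t \in \{k+1, \ldots, T\}$ we have $\ket{\psi_t} = G^k H^{\otimes n} \ket{0}^{\otimes n}$; hence the last $r+1$ samples returned by $\mathcal{Q}_P$ are independent computational-basis measurements of one and the same post-Grover state.

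By the standard Grover analysis, this state equals $\sin((2k+1)\theta)\ket{x} + \cos((2k+1)\theta)\ket{x^\perp}$, where $\sin\theta = 1/\sqrt{N}$ and $\ket{x^\perp}$ is the uniform superposition over the unmarked items. For $k = \Theta(N^{1/3})$ we have $(2k+1)\theta = \Theta(N^{-1/6})$, so the probability $p$ that a single non-collapsing sample equals $x$ is $\sin^2((2k+1)\theta) = \Theta(N^{-1/3})$. With $r+1 = \Theta(N^{1/3})$ independent samples, the probability that $x$ appears at least once is $1 - (1-p)^{r+1} \geq 1 - e^{-\Omega(c_1^2 c_2)}$, which exceeds $2/3$ for any sufficiently large constants $c_1, c_2$.

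Finally, the classical Turing machine scans through the returned list of $O(N^{1/3})$ strings, uses one classical query to $f$ to verify each candidate, and outputs any $y$ with $f(y) = 1$. This uses $O(N^{1/3})$ queries to $U_f$ inside the Grover iterations plus $O(N^{1/3})$ classical verification queries, and the total running time is $\tilde O(N^{1/3})$ after accounting for the $\mathrm{poly}(n)$ gate cost of each Grover reflection and the linear scan through the samples. No step presents a genuine obstacle: the argument is standard Grover amplitude analysis combined with a routine ``at least one success'' Bernoulli bound, and the only model-specific ingredient is the ability of non-collapsing measurements to extract many independent samples from a single state preparation.
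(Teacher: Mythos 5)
Your core construction is the paper's own: truncate Grover at $\Theta(N^{1/3})$ iterations so the marked amplitude is $\Theta(N^{-1/6})$, then harvest $\Theta(N^{1/3})$ non-collapsing samples of that single state (the paper uses $O(N^{1/3}\log N)$ samples for success $1-o(1)$; your constant-factor version with success $\geq 2/3$ is equally fine), and your amplitude and ``at least one success'' calculations are correct.

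The one step that does not fit the model as stated is the final identification of $x$: you have the classical Turing machine make $O(N^{1/3})$ \emph{classical} queries to $f$ to verify candidates. The theorem grants access to $f$ only through the gates $U_f$ and controlled-$U_f$ appearing in the circuit handed to $\mathcal{Q}_P$; the deterministic base machine has no query access of its own, and since it makes a single non-adaptive call to $\mathcal{Q}_P$, the circuit cannot be modified after the samples are seen to test candidates. So, as written, your verification step uses a resource the model does not supply. The paper avoids this by spending one extra quantum query at the end of the circuit to write $f(y)$ into an ancilla (controlled-$U_f$ plus phase kickback), producing $\alpha\ket{x}\ket{1}+\beta\sum_y\ket{y}\ket{0}$, so any sample whose flag bit is $1$ directly reveals $x$. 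Alternatively, you could keep your circuit and identify $x$ purely from the sample statistics: choose the constants so that $x$ is expected to appear a large constant number of times, and note that each unmarked $y$ appears twice or more with probability $O(N^{2/3}/N^{2})$, so by a union bound over all $N$ unmarked items, with probability $1-O(N^{-1/3})$ the only repeated string is $x$. Either fix is a one-line change, but some such step is needed, since the theorem asks you to output $x$ itself, not a short list containing it.
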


\begin{proof}
Prepare the uniform superposition of all basis states, apply $N^{1/3}$ Grover iterations~\cite{grover}, then query the oracle to record whether or not each basis state is marked in an ancilla. We obtain the state $\alpha \ket{x}\ket{1} + \beta \sum_{y \in \{0, 1\}^n} \ket{y}\ket{0}$ with
\begin{align*}
\alpha &= \frac{1}{\sqrt{2^{n/3} + 2^{-n/3 + 1} + 1}} & & \beta = 2^{-n/3} \alpha.
\end{align*}
Now make $O(N^{1/3} \log N)$ non-collapsing measurements. We claim that with high probability, the marked item $x$ will appear at least once. Indeed, the marked item $x$ appears with probability at least $\Omega\left( N^{-1/3}\right)$ in each non-collapsing measurement outcome, so it occurs at least once with probability more than $1 - (\log N + 1)e^{-\log N} = 1 - o(1)$. 
\end{proof}
\par Note that if we are willing to use an enormous amount of \emph{time}, we can search in the $\PDQP$ model using only one \emph{query}: just query the oracle in superposition and then perform $O(N)$  non-collapsing measurements. Indeed as we note in the introduction, any function $f$ has query complexity $1$ in this model, although this approach requires exponentially many non-collapsing measurements. Therefore in this model of computation, the relevant measure of complexity of an algorithm is the number of queries $Q$ plus the number of non-collapsing measurements $T$ used by the algorithm. Our above algorithm uses $Q+T=\tilde O(N^{1/3})$ of each, with $O(N^{1/3})$ post-processing time, so we say it ``runs in time $\tilde O(N^{1/3})$".

\section{Lower bounds for search}

We now show that our search algorithm in section~\ref{sec:search13} cannot be improved by much; in particular there is no way to solve search in faster than $N^{1/4}$ time, even with non-collapsing measurements.

\begin{theorem} \label{lowerbound}
 Suppose, in the definition of $\PDQP$, that the unitary operators $U_1, \cdots, U_T$ are now allowed to query $f$. Let $Q$ be the number of queries to $f$ made by a $\PDQP$ algorithm, and $T$ be the number of non-collapsing measurements. Then any $\PDQP$ algorithm to find the value of $x$  obeys  $Q+T=\Omega(N^{1/4})$, and hence search requires $\Omega(N^{1/4})$ time.
\end{theorem}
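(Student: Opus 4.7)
The plan is to reduce the lower bound to distinguishing a uniformly random marked oracle $f_x$ from the all-zeros oracle $f_0$, and then to prove the quantitative bound $QT = \Omega(\sqrt{N})$; the desired estimate $Q + T = \Omega(N^{1/4})$ then follows by AM--GM. The reduction is standard: if the algorithm outputs the marked element $x$ with probability at least $2/3$ under oracle $f_x$ for uniformly random $x$, while (by symmetry) outputting any particular $x$ with probability at most $1/N$ under the trivial oracle $f_0$, then the full transcripts $D^x$ and $D^0$ returned by $\mathcal{Q}_P$ must satisfy $\mathbb{E}_x \|D^x - D^0\|_{TV} = \Omega(1)$.

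To control this total-variation distance, I would first defer every collapsing measurement into an ancilla register, so that the joint data-plus-ancilla state $\ket{\Psi_t^f}$ at each time $t$ evolves unitarily, with only the query gates $U_f$ depending on the oracle $f$. The standard Bennett--Bernstein--Brassard--Vazirani hybrid argument~\cite{hybrid}, applied to the at most $Q$ queries performed by time $t$, then yields
\[ \sum_{x \in \{0,1\}^n} \norm{\ket{\Psi_t^x} - \ket{\Psi_t^0}}^2 \ \leq\ 4 Q^2 \qquad \text{for every } t \leq T, \]
so that $\mathbb{E}_x \norm{\ket{\Psi_t^x} - \ket{\Psi_t^0}}^2 \leq 4Q^2/N$.

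The heart of the argument is the next step, which converts these pure-state drifts into a bound on the TV distance of the full transcript $(\vec{a}, \vec{v})$, where $\vec{a}$ collects the collapsing outcomes and $\vec{v} = (v_0, \ldots, v_T)$ the non-collapsing samples. Conditioned on $\vec{a}$, the $v_t$ are independent, with $v_t$ drawn from the measurement distribution $p^{f,\vec{a}}_t(v) = |\braket{v}{\psi_t^{f,\vec{a}}}|^2$ of the conditional slice $\ket{\psi_t^{f,\vec{a}}}$ of $\ket{\Psi_t^f}$. Combining the chain rule for TV distance, its sub-additivity under independent products, and the elementary inequality $\|p_\phi - p_{\phi'}\|_{TV} \leq \norm{\ket{\phi}-\ket{\phi'}}$, I expect to obtain a bound of the form
\[ \|D^x - D^0\|_{TV} \ \leq\ \norm{\ket{\Psi_T^x} - \ket{\Psi_T^0}} + \sum_{t=0}^T \mathbb{E}_{\vec{a}} \, \norm{\ket{\psi_t^{x,\vec{a}}} - \ket{\psi_t^{0,\vec{a}}}}, \]
and each inner expectation should in turn be $O(\norm{\ket{\Psi_t^x} - \ket{\Psi_t^0}})$ by a Fuchs--van de Graaf style comparison of purifications. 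Plugging in the hybrid bound and averaging over $x$ via Jensen's inequality gives $\mathbb{E}_x \|D^x - D^0\|_{TV} = O((T+1) \cdot Q / \sqrt{N})$, which combined with the $\Omega(1)$ lower bound from the reduction forces $QT = \Omega(\sqrt{N})$ and hence $Q + T = \Omega(N^{1/4})$.

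The main obstacle is exactly this third step: translating a comparison of two random sequences of pure states---which, as the introduction highlights, form a Markov chain once the collapsing outcomes are fixed---into a single clean TV bound on the transcript without losing more than a constant factor per time step. Because the ancilla distributions $\Pr[\vec{a} \mid x]$ and $\Pr[\vec{a} \mid 0]$ generally differ, one must couple the two processes carefully and be attentive to how the Fuchs--van de Graaf inequality relates purification distances to averaged conditional-state distances. It was precisely at an analogous Markov-chain step that the original hidden-variable lower-bound proof~\cite{hidden} went astray, so extra vigilance here is warranted.
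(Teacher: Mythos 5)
Your proposal is correct in outline, and it takes a genuinely different route from the paper. The paper treats the transcript $v=(v_0,\ldots,v_T)$ itself as a Markov chain (after arranging that measured qubits are never modified again) and proves a dedicated lemma, $d_{TV}(v,w)\le 2\sum_i d_{TV}((v_{i-1},v_i),(w_{i-1},w_i))$ for Markov chains, then bounds each consecutive-pair distance by $5\norm{\ket{\phi(x)}-\ket{\phi}}_2$ via an explicit block decomposition of the deferred-measurement state (the $S_1,S_2,S_3$ estimates); you instead condition on the collapsing record $\vec a$, use the conditional independence of the non-collapsing samples given $\vec a$ (each $v_t$ depending only on the prefix $\vec a_{\le t}$), and compare conditional states to purified global states, thereby sidestepping the Markov-chain lemma entirely --- which is exactly the piece that failed in the original $\DQP$ argument. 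The step you flag as the main obstacle does go through, but the clean way to prove it is not Fuchs--van de Graaf on reduced states; instead use the two \emph{specific, aligned} purifications produced by the deferred-measurement circuits, $\ket{\Psi_t^f}=\sum_{\vec a}\sqrt{p^f_{\vec a}}\ket{\vec a}\ket{\psi_t^{f,\vec a}}$ (with the conditional states carrying the phases inherited from this decomposition, which is harmless since the sampling distributions are phase-independent). Then the triangle inequality gives $\sqrt{p^x_{\vec a}}\,\norm{\ket{\psi_t^{x,\vec a}}-\ket{\psi_t^{0,\vec a}}}\le \norm{\sqrt{p^x_{\vec a}}\ket{\psi_t^{x,\vec a}}-\sqrt{p^0_{\vec a}}\ket{\psi_t^{0,\vec a}}}+\bigl|\sqrt{p^x_{\vec a}}-\sqrt{p^0_{\vec a}}\bigr|\le 2\norm{\sqrt{p^x_{\vec a}}\ket{\psi_t^{x,\vec a}}-\sqrt{p^0_{\vec a}}\ket{\psi_t^{0,\vec a}}}$, and Cauchy--Schwarz over $\vec a$ yields $\mathbb{E}_{\vec a\sim\Pr^x}\norm{\ket{\psi_t^{x,\vec a}}-\ket{\psi_t^{0,\vec a}}}_2\le 2\norm{\ket{\Psi_t^x}-\ket{\Psi_t^0}}_2$. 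With that filled in, your chain rule, product subadditivity, the hybrid bound on the (now fully unitary) purified runs, and averaging over $x$ (note this is an averaging argument, not symmetry, that gives $\Pr[\text{output }x\mid f_0]\le 1/N$ for an average $x$) give $\mathbb{E}_x\, d_{TV}(D^x,D^0)=O(TQ/\sqrt N)$, matching the paper's $20TQ/\sqrt N$ up to constants and yielding the same conclusion $Q+T=\Omega(N^{1/4})$. What each route buys: the paper's isolates a reusable Markov-chain lemma (also invoked in its gate-set-independence appendix), while yours is arguably more elementary, replacing the term-by-term block computation with conditional independence plus a two-line purification comparison.
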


\par In other words, there is no ``black box" polynomial-time algorithm for $\NP$-hard problems, even when given access to non-collapsing measurements. This is evidence that the class $\PDQP$ does not contain $\NP$. The following corollary follows immediately from the well-known ``diagonalization method'' of Baker, Gill, and Solovay~\cite{relative}:

\begin{corollary} There exists an oracle $\mathcal{O}$ such that $\NP^\mathcal{O} \not \subset \PDQP^\mathcal{O}$.
\end{corollary}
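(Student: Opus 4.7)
The plan is to apply the Baker--Gill--Solovay diagonalization pattern, with Theorem~\ref{lowerbound} as the hardness hammer. Define the oracle language
\[
L_{\mathcal{O}} = \{1^n : \exists\, x \in \{0,1\}^n \text{ with } x \in \mathcal{O}\},
\]
which lies in $\NP^{\mathcal{O}}$ for every $\mathcal{O}$, with the string $x$ itself as the $\NP$ witness. It therefore suffices to construct an oracle $\mathcal{O}$ such that $L_{\mathcal{O}} \notin \PDQP^{\mathcal{O}}$. I would build $\mathcal{O}$ in stages, interleaving diagonalization against an enumeration $\{M_i\}_{i\geq 1}$ of $\PDQP$ oracle machines, where each $M_i$ consists of a deterministic polynomial-time classical base machine making a single call to $\mathcal{Q}_P$ on a quantum circuit of size at most $n^{c_i}$ (with oracle gates for $\mathcal{O}$).

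At stage $i$, pick an input length $n_i$ that is strictly larger than every length touched in earlier stages and large enough that $n_i^{c_i} \cdot n_i \ll N_i^{1/4}$ with $N_i := 2^{n_i}$. The oracle will be set to $0$ on every string of length $n_i$ \emph{except possibly} one ``marked'' string $x_i \in \{0,1\}^{n_i}$, which is either absent (so $1^{n_i} \notin L_{\mathcal{O}}$) or present (so $1^{n_i} \in L_{\mathcal{O}}$). Suppose for contradiction that $M_i$ correctly decides $L_{\mathcal{O}}$ with error $\leq 1/3$ for both possible settings of the length-$n_i$ slice. Then $M_i$ distinguishes an empty length-$n_i$ slice from a uniformly chosen singleton slice with advantage $\geq 1/3$; a standard search-to-decision reduction (binary search over the $2^{n_i}$ length-$n_i$ strings, using $O(n_i)$ decision calls to narrow down the marked string) then produces a $\PDQP$ algorithm that finds $x_i$ using $O(n_i \cdot n_i^{c_i})$ queries plus non-collapsing measurements, contradicting Theorem~\ref{lowerbound}'s lower bound of $\Omega(N_i^{1/4})$ by our choice of $n_i$. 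Hence there is at least one extension of $\mathcal{O}$ on length $n_i$ (either empty or a single marked $x_i$) that forces $M_i$ to err on input $1^{n_i}$; fix any such extension and proceed to stage $i+1$. Standard bookkeeping ensures stages do not interfere: $M_i$ runs in time $n_i^{c_i} \ll 2^{n_j}$ for $j > i$, so it cannot even see the later stages, and earlier stages occupy strictly smaller input lengths.

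The main obstacle is the search-to-decision mismatch: Theorem~\ref{lowerbound} is phrased as a lower bound for \emph{finding} the marked item, whereas diagonalization needs a lower bound for \emph{deciding existence}. The binary-search reduction above closes this gap at the cost of an $O(\log N_i)$ factor in the query/measurement budget, which is harmless because I built in exponential slack between $n_i^{c_i}$ and $N_i^{1/4}$. (A cleaner alternative, if desired, is to note that the hybrid argument underlying Theorem~\ref{lowerbound} presumably already yields a distinguishing bound between the empty oracle and a random singleton oracle of the same $\Omega(N^{1/4})$ shape, eliminating even the logarithmic loss.) Everything else---the enumerability of $\PDQP$ machines up to their polynomial time bound, and the measurability of the error probability of $M_i$ under a finite oracle---is routine and follows the classical Baker--Gill--Solovay template exactly as in the analogous proof that $\NP \not\subset \BQP$ relative to an oracle.
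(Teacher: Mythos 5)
Your overall strategy is exactly what the paper intends: the paper gives no proof beyond invoking the Baker--Gill--Solovay diagonalization on top of Theorem~\ref{lowerbound}, and your stage-by-stage construction with the language $L_{\mathcal{O}}$ is the standard way to carry that out. However, your primary route has a genuine gap in the search-to-decision step. A $\PDQP$ algorithm is, by definition, a deterministic polynomial-time machine making a \emph{single} query to $\mathcal{Q}_P$, and (as the paper notes in its appendix on strange properties of non-collapsing measurements) the quantum circuit handed to $\mathcal{Q}_P$ cannot depend on the non-collapsing measurement outcomes. Your binary search is adaptive: each new decision call, i.e.\ each new run of $M_i$'s quantum circuit on a restricted oracle, depends on the classical postprocessing of the previous run's $\mathcal{Q}_P$ output. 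Such an adaptive composition of $O(n_i)$ runs of $M_i$ is not itself a $\PDQP$ algorithm, so it does not contradict Theorem~\ref{lowerbound} as stated; you would also need to handle error amplification across the adaptive calls. (A non-adaptive bit-by-bit variant --- ``is there a marked item whose $j$-th bit is $1$?'' for each $j$, all packed into one $\mathcal{Q}_P$ call with majority-vote repetition --- can be made to work, but that is not the reduction you describe.)

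The clean fix is the one you relegate to a parenthetical, and it requires no reduction at all: the proof of Theorem~\ref{lowerbound} does not merely bound the cost of \emph{finding} $x$; it shows that the transcript distribution $v$ with no marked item and the transcript distribution $v(x)$ with marked item $x$ satisfy $d_{TV}(v,v(x)) \leq 20TQ/\sqrt{N}$ for some $x$. Hence any $\PDQP^{\mathcal{O}}$ machine with $Q+T = o(N_i^{1/4})$ fails, with probability bounded away from $2/3$, to distinguish the empty length-$n_i$ slice from the singleton slice at that $x$, so at stage $i$ you can directly pick an extension (empty, or the singleton at that $x$) on which $M_i$ errs on $1^{n_i}$. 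With that substitution your diagonalization bookkeeping is fine and the argument matches the paper's intended proof.
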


We now proceed to a proof of Theorem \ref{lowerbound}. The following lemma is essential: it bounds the total variation distance between two Markov distributions.
\begin{lemma}
\label{union}
 Suppose that $T \geq 1$, and that $v = (v_0, \cdots, v_T)$ is a random variable governed by a Markov distribution. That is, for all $1 \leq i \leq T$, $v_i$ is independent of $v_0, \cdots, v_{i-2}$ conditioned on a particular value of $v_{i-1}$. Let $w = (w_0, \cdots, w_T)$ be another random variable governed by a Markov distribution. If $d_{TV}(\blank, \blank)$ denotes the total variation distance between random variables, then \[d_{TV}(v, w) \leq 2 \sum_{i=1}^{T} d_{TV}((v_{i-1}, v_i), (w_{i-1}, w_i)).\]
\end{lemma}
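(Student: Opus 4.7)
The plan is to prove the lemma by a telescoping/hybrid argument in which transitions of $w$'s Markov kernel are replaced one at a time by the corresponding transitions of $v$'s kernel. For $k = 1, \ldots, T$, I would define the hybrid distribution $H_k$ on $(x_0, \ldots, x_T)$ by
\[H_k(x_0, \ldots, x_T) \;=\; p_v(x_0, x_1, \ldots, x_k)\, \prod_{i=k+1}^{T} p_w(x_i \mid x_{i-1}),\]
where $p_v(x_0, \ldots, x_k)$ is the joint marginal of $(v_0, \ldots, v_k)$ and $p_w(b \mid a)$ is $w$'s one-step transition kernel. Note that $H_T$ equals the distribution of $v$, while $H_1$ agrees with the distribution of $w$ except for the joint marginal on $(x_0, x_1)$. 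The triangle inequality then gives $d_{TV}(v, w) \leq d_{TV}(H_1, w) + \sum_{k=2}^{T} d_{TV}(H_k, H_{k-1})$.

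The first term is clean: since $H_1$ and $w$ have identical conditional distributions of $(x_2, \ldots, x_T)$ given $x_1$ and differ only on $(x_0, x_1)$, one gets $d_{TV}(H_1, w) = d_{TV}((v_0, v_1), (w_0, w_1)) = d_1$. For the subsequent hybrid increments, the Markov property lets one integrate out all coordinates except $x_{k-1}$ and $x_k$, giving
\[d_{TV}(H_k, H_{k-1}) \;=\; \tfrac{1}{2} \sum_{a, b} p_{v_{k-1}}(a)\, \bigl|p_v(b \mid a) - p_w(b \mid a)\bigr|,\]
where $p_{v_{k-1}}(a)$ is the marginal of $v_{k-1}$.

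The key step is to bound this expression by $2 d_k$. Writing $p_{v_{k-1}}(a)[p_v(b \mid a) - p_w(b \mid a)] = [p_v(a, b) - p_w(a, b)] + p_w(b \mid a)[p_{w_{k-1}}(a) - p_{v_{k-1}}(a)]$, taking absolute values, and summing over $(a, b)$ yields $d_{TV}(H_k, H_{k-1}) \leq d_k + d_{TV}(v_{k-1}, w_{k-1})$. The data-processing inequality applied to the first-coordinate projection gives $d_{TV}(v_{k-1}, w_{k-1}) \leq d_k$, and hence $d_{TV}(H_k, H_{k-1}) \leq 2 d_k$ for every $k \geq 2$.

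Putting the pieces together, $d_{TV}(v, w) \leq d_1 + 2 \sum_{k=2}^{T} d_k \leq 2 \sum_{k=1}^{T} d_k$, which is exactly the claimed bound. The main obstacle --- and the reason for choosing $H_1$ rather than a hybrid built from the singleton marginal of $v_0$ as the base of the telescoping --- is to avoid picking up an extra $d_{TV}(v_0, w_0)$ term; starting at the pair-based hybrid $H_1$ absorbs the contribution of the initial marginal into the single occurrence of $d_1$ and so preserves the constant $2$ in the final bound.
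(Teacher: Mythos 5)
Your proof is correct and is essentially the paper's own argument: your hybrid $H_{T-1}$ is exactly the paper's intermediate variable $v' = (v_0,\cdots,v_{T-1},w'_T)$, and your telescoping over $k$ is the unrolled form of the paper's induction on $T$, resting on the same key bound $d_{TV}(H_k,H_{k-1}) \leq d_k + d_{TV}(v_{k-1},w_{k-1}) \leq 2d_k$ (which the paper gets by the triangle inequality plus data processing rather than your explicit pointwise decomposition). There is no gap; your bookkeeping even yields the same slight sharpening $d_1 + 2\sum_{k\geq 2} d_k$ that is implicit in the paper's induction.
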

\begin{proof}
 \par We proceed by induction on $T$. The base case $T=1$ is trivial. For $T > 1$, since $w_T$ depends only on $w_{T-1}$ (by the Markov property), it is equal to $A(w_{T-1})$ for some randomized process $A$; let $w'_T := A(v_{T-1})$ be a variable that depends on $v_{T-1}$ in exactly the same way that $w_T$ depends on $w_{T-1}$. Then, define the random variable $v' = (v_0, \cdots, v_{T-1}, w'_T)$. By the triangle inequality,
\begin{equation} \label{triangle1}
  d_{TV}(v, w) \leq d_{TV}(v,v') + d_{TV}(v',w).
\end{equation}
 \par Applying the same randomized process to two random variables cannot increase their total variation distance~\cite{szk}. We can generate random variables identically distributed to $v$ and $v'$ by applying a suitable randomized process to $(v_{T-1}, v_T)$ and $(v_{T-1}, w'_T)$. We can also generate random variables identically distributed to $v'$ and $w$ by applying a suitable randomized process to $(v_0, \cdots, v_{T-1})$ and $(w_0, \cdots, w_{T-1})$. Therefore, the right hand side of (\ref{triangle1}) is bounded above by \[d_{TV}((v_{T-1}, v_T), (v_{T-1}, w'_T)) + d_{TV}((v_0, \cdots, v_{T-1}), (w_0, \cdots, w_{T-1})).\] By the triangle inequality,
 \begin{align*}
 d_{TV}((v_{T-1}, v_T), (v_{T-1}, w'_T)) &\leq d_{TV}((v_{T-1}, v_T), (w_{T-1}, w_T)) + d_{TV}((w_{T-1}, w_T), (v_{T-1}, w'_T)) \\
 &= d_{TV}((v_{T-1}, v_T), (w_{T-1}, w_T)) + d_{TV}(v_{T-1}, w_{T-1}) \\
 &\leq 2 d_{TV}((v_{T-1}, v_T), (w_{T-1}, w_T)).
 \end{align*}
 Putting all of this together, \[d_{TV}(v,w) \leq 2 d_{TV}((v_{T-1}, v_T), (w_{T-1}, w_T)) + d_{TV}((v_0, \cdots, v_{T-1}), (w_0, \cdots, w_{T-1})).\] The result follows from induction.
\end{proof}
\begin{lemma}\label{trace}
 The trace distance between two pure states $\ket{\psi} \bra{\psi}$ and $\ket{\phi} \bra{\phi}$ is less than or equal to the 2-norm $\norm{\ket{\psi} - \ket{\phi}}_2$.
\end{lemma}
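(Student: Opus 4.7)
The plan is to reduce both sides to closed-form expressions in the inner product $\langle \psi | \phi \rangle$ and then verify an elementary inequality.

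First I would recall (or quickly derive) the standard closed form for the trace distance between two pure states, namely
\[
D\bigl(\ket{\psi}\bra{\psi},\, \ket{\phi}\bra{\phi}\bigr) \;=\; \sqrt{1 - |\langle \psi | \phi \rangle|^{2}}.
\]
This is obtained by noting that the operator $\ket{\psi}\bra{\psi} - \ket{\phi}\bra{\phi}$ has rank at most $2$, so after restricting to the two-dimensional span of $\ket{\psi}$ and $\ket{\phi}$ one can diagonalize it explicitly; its two nonzero eigenvalues are $\pm\sqrt{1 - |\langle\psi|\phi\rangle|^{2}}$, and the trace distance is half the sum of their absolute values.

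Next I would expand the 2-norm on the right-hand side:
\[
\bigl\| \ket{\psi} - \ket{\phi} \bigr\|_{2}^{2} \;=\; 2 - 2\,\mathrm{Re}\,\langle \psi | \phi \rangle,
\]
using $\langle \psi | \psi \rangle = \langle \phi | \phi \rangle = 1$. So it suffices to prove
\[
1 - |\langle \psi | \phi \rangle|^{2} \;\leq\; 2 - 2\,\mathrm{Re}\,\langle \psi | \phi \rangle.
\]
Writing $\langle \psi | \phi \rangle = r e^{i\theta}$ with $r \in [0,1]$, this reduces to $2r\cos\theta \leq 1 + r^{2}$, which follows from $1 + r^{2} \geq 2r \geq 2r\cos\theta$ by AM-GM. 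Taking square roots gives the claimed bound.

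There is no real obstacle here; the only mild subtlety is ensuring the inequality goes through when $\langle\psi|\phi\rangle$ is complex rather than real, which is handled by the step above that bounds $\mathrm{Re}\,\langle\psi|\phi\rangle$ by $|\langle\psi|\phi\rangle|$.
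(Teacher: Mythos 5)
Your proof is correct and follows essentially the same route as the paper: both use the closed form $\sqrt{1-|\braket{\psi}{\phi}|^2}$ for the trace distance of pure states, expand $\norm{\ket{\psi}-\ket{\phi}}_2^2 = 2-2\mathrm{Re}\braket{\psi}{\phi}$, and finish with an elementary inequality using $|\braket{\psi}{\phi}|\leq 1$. Your write-up just spells out the eigenvalue derivation and the AM--GM step more explicitly than the paper does.
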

\begin{proof}
 The trace distance between $\ket{\psi} \bra{\psi}$ and $\ket{\phi} \bra{\phi}$ is equal to $\sqrt{1 - |\braket{\psi}{\phi}|^2}$~\cite[p. 415]{quantum}, and the 2-norm $\norm{\ket{\psi} - \ket{\phi}}_2$ is $\sqrt{2 - 2\mathrm{Re}(\braket{\psi}{\phi})}$. The inequality follows from $|\braket{\psi}{\phi}| \leq 1$.
\end{proof}
From the hybrid argument of \cite{hybrid}, we have the following:
\begin{lemma}
\label{hybrid}
 For all $t$, if there are no measurements made before time $t$, $\sum_{x=0}^{N-1} \norm{\ket{\psi_t} - \ket{\psi_t(x)}}_2^2 \leq 4Q^2.$ \end{lemma}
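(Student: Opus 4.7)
The plan is to apply the standard BBBV hybrid argument of~\cite{hybrid}. Since there are no measurements before time $t$, both $\ket{\psi_t}$ (the state under the all-zeros oracle, with no marked item) and $\ket{\psi_t(x)}$ (the state under the oracle that marks $x$) are deterministic pure states produced by the \emph{same} fixed sequence of $f$-independent unitaries, interleaved with at most $Q$ query operators. The query operator is the identity in the clean evolution, and $V^{(x)} = I - 2\ket{x}\bra{x}$ in the $x$-marked evolution.

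First I would define, for each $x$, a hybrid family $\ket{H_k(x)}$ indexed by $k = 0, 1, \ldots, Q$: in $\ket{H_k(x)}$, the first $k$ query operators are replaced by the identity and the remaining $Q-k$ are $V^{(x)}$, while all non-query unitaries are unchanged. Thus $\ket{H_0(x)} = \ket{\psi_t(x)}$ and $\ket{H_Q(x)} = \ket{\psi_t}$. The crucial observation is that in both $\ket{H_{k-1}(x)}$ and $\ket{H_k(x)}$, the state just before the $k$-th query is the \emph{same} clean state $\ket{\phi_k}$, since the first $k-1$ queries are identity in both hybrids. Hence
\[
\ket{H_{k-1}(x)} - \ket{H_k(x)} \;=\; W_k\bigl(V^{(x)} - I\bigr)\ket{\phi_k} \;=\; -2\,\langle x | \phi_k\rangle\, W_k\ket{x},
\]
for a unitary $W_k$ propagating the state from just after the $k$-th query to time $t$, so $\norm{\ket{H_{k-1}(x)} - \ket{H_k(x)}}_2 = 2\,|\langle x | \phi_k\rangle|$.

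Next, telescoping together with the triangle inequality yields $\norm{\ket{\psi_t} - \ket{\psi_t(x)}}_2 \leq 2\sum_{k=1}^{Q} |\langle x|\phi_k\rangle|$. Squaring and applying Cauchy--Schwarz to the $Q$-term sum gives $\norm{\ket{\psi_t} - \ket{\psi_t(x)}}_2^2 \leq 4Q \sum_{k=1}^{Q} |\langle x|\phi_k\rangle|^2$. Summing over $x$ and exchanging the order of summation,
\[
\sum_{x=0}^{N-1}\norm{\ket{\psi_t} - \ket{\psi_t(x)}}_2^2 \;\leq\; 4Q \sum_{k=1}^{Q} \sum_{x=0}^{N-1}|\langle x|\phi_k\rangle|^2 \;=\; 4Q \cdot Q \;=\; 4Q^2,
\]
where the penultimate equality uses $\sum_x |\langle x|\phi_k\rangle|^2 = \norm{\ket{\phi_k}}_2^2 = 1$.

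The only real care needed is the ordering of the hybrid: by placing the identity queries \emph{before} the $V^{(x)}$ queries, the pre-query states $\ket{\phi_k}$ are independent of $x$, which is exactly what lets the final sum over $x$ collapse to $Q$ rather than to an $x$-dependent quantity. Beyond this bookkeeping the argument is the canonical BBBV hybrid bound and presents no substantive obstacle.
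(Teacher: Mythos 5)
Your proof is correct and is exactly the BBBV hybrid argument that the paper itself relies on: the paper states Lemma~\ref{hybrid} without proof, simply citing \cite{hybrid}, and your hybrid family, telescoping bound $\norm{\ket{\psi_t}-\ket{\psi_t(x)}}_2 \leq 2\sum_k |\braket{x}{\phi_k}|$, and Cauchy--Schwarz step reproduce that argument faithfully. The only cosmetic remark is that since the circuit may use ancillas and controlled-$U_f$, the quantity $|\braket{x}{\phi_k}|$ should be read as the norm of the projection of $\ket{\phi_k}$ onto query-register value $x$ (the ``query magnitude''), which changes nothing in the bound.
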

\begin{proof}[Proof of Theorem~\ref{lowerbound}]
\par Since it is always possible to copy measured qubits, we can assume that qubits which are measured in an intermediate step of the algorithm are never directly modified again. Now, assume that the algorithm uses $\ell$ qubits and applies unitary operators $U_1, \cdots, U_T$, each of which is either a (controlled) query to the search function $f$ or a gate from the finite universal gate set $\mathcal{U}$. The measurements $M_1 \ldots M_T$ (which may or may not be empty) are applied between the operators $U_1 \ldots U_T$.
\par Let $v(x) = (v_0(x), v_1(x), \cdots, v_T(x))$ be the non-collapsing measurement results when the marked item is $x$, so that $v_i(x)$ is sampled immediately before the application of $U_{i+1}$. Let $v = (v_0, \cdots, v_T)$ be the non-collapsing measurement results when there is no marked item. In general, both $v(x)$ and $v$ are random variables. Since the postprocessing step can distinguish the distributions of $v$ and $v(x)$ with success probability $2/3$, $d_{TV}(v, v(x)) \geq \frac{1}{3}$ for all $x$. On the other hand, each $v$ and $v(x)$ is a Markov process. Therefore, by Lemma~\ref{union}, \[d_{TV}(v, v(x)) \leq 2 \sum_{i=1}^T d_{TV}((v_{i-1}, v_i), (v_{i-1}(x), v_i(x))).\]
\par Now, we bound the term $d_{x, i} := d_{TV}((v_{i-1}, v_i), (v_{i-1}(x), v_i(x)))$. Since it is possible to defer measurements in a quantum circuit to a later stage~\cite[p. 186]{quantum}, we can assume that all intermediate measurements that occurred before the application of $U_{i}$ occurred immediately before the sampling of $v_i$. Suppose that these measurements were applied to the first $k$ qubits of the state. Let $\ket{\phi}$ and $\ket{\phi(x)}$ be the state vectors immediately before these measurements. Then, we decompose $\ket{\phi} = \sum_{s \in \{0, 1\}^k} \alpha_s \ket{s} \ket{\phi_s}$ and $\ket{\phi(x)} = \sum_{s \in \{0, 1\}^k} \beta_s \ket{s} \ket{\phi_s(x)}$. Possible values for $(v_{i-1}, v_i)$ and $(v_{i-1}(x), v_i(x))$ can be written in the form $(st_1, st_2)$, where $s$ is a $k$-bit string and $t_1, t_2$ are $(\ell-k)$-bit strings.
\par Assume for now that $U_i$ does not contain a query to $f$. Then, since it does not affect the first $k$ qubits, it can be decomposed into the sum $\sum_{s \in \{0, 1\}^k} \ket{s} V_s \bra{s}$ for some unitary operators $V_s$. The transformation $U_i$ can be thought of as applying the unitary $V_s$ to the last $\ell - k$ qubits if the (measured) first $k$ qubits are equal to $s$. Then, the probability that $(v_{i-1}, v_i) = (st_1, st_2)$ is equal to $|\alpha_s|^2 |\braket{t_1}{\phi_s}|^2 |\bra{t_2}V_s\ket{\phi_s}|^2,$ and the probability that $(v_{i-1}(x), v_i(x)) = (st_1, st_2)$ is equal to $|\beta_s|^2 |\braket{t_1}{\phi_s(x)}|^2 |\bra{t_2}V_s\ket{\phi_s(x)}|^2.$ Therefore, the total variation distance $d_{x,i}$ is by the triangle inequality
\begin{align*}
d_{x,i} &= \frac{1}{2} \sum_{s, t_1, t_2} \left||\alpha_s|^2 |\braket{t_1}{\phi_s}|^2 |\bra{t_2}V_s\ket{\phi_s}|^2 - |\beta_s|^2 |\braket{t_1}{\phi_s(x)}|^2 |\bra{t_2}V_s\ket{\phi_s(x)}|^2 \right| \\
&\leq \frac{1}{2} \sum_{s, t_1, t_2} \left(\left||\alpha_s|^2 |\braket{t_1}{\phi_s(x)}|^2 |\bra{t_2}V_s\ket{\phi_s(x)}|^2 -  |\beta_s|^2 |\braket{t_1}{\phi_s(x)}|^2 |\bra{t_2}V_s\ket{\phi_s(x)}|^2\right|  \right) \\
&\qquad + \frac{1}{2} \sum_{s, t_1, t_2}\left(|\alpha_s|^2 \left||\braket{t_1}{\phi_s}|^2 |\bra{t_2}V_s\ket{\phi_s}|^2 - |\braket{t_1}{\phi_s(x)}|^2 |\bra{t_2}V_s\ket{\phi_s}|^2 \right| \right) \\
&\qquad + \frac{1}{2} \sum_{s, t_1, t_2}\left(|\alpha_s|^2 \left||\braket{t_1}{\phi_s(x)}|^2 |\bra{t_2}V_s\ket{\phi_s}|^2 - |\braket{t_1}{\phi_s(x)}|^2 |\bra{t_2}V_s\ket{\phi_s(x)}|^2 \right| \right) \\
&=: \frac{1}{2}(S_1 + S_2 + S_3)
\end{align*}
where $S_1, S_2, S_3$ are the three sums written above, which range over $s \in \{0,1\}^k$ and $t_1, t_2 \in \{0,1\}^{\ell - k}$. Now, we have:
\begin{align*}
 S_1 &:= \sum_{s, t_1, t_2} \left(\left||\alpha_s|^2 |\braket{t_1}{\phi_s(x)}|^2 |\bra{t_2}V_s\ket{\phi_s(x)}|^2 -  |\beta_s|^2 |\braket{t_1}{\phi_s(x)}|^2 |\bra{t_2}V_s\ket{\phi_s(x)}|^2\right|  \right) \\
  &= \sum_s \left||\alpha_s|^2 - |\beta_s|^2\right| \left(\sum_{t_1, t_2} |\braket{t_1}{\phi_s(x)}|^2 |\bra{t_2}V_s\ket{\phi_s(x)}|^2\right) \\
 &= \sum_s \left||\alpha_s|^2 - |\beta_s|^2\right| \\
  &\leq \norm{\ket{\phi} \bra{\phi} - \ket{\phi(x)} \bra{\phi(x)}}_{tr} \\
  &\leq 2\norm{\ket{\phi(x)} - \ket{\phi}}_2.
\end{align*}
Additionally,
\begin{align*}
 S_2 &:= \sum_{s, t_1, t_2}\left(|\alpha_s|^2 \left||\braket{t_1}{\phi_s}|^2 |\bra{t_2}V_s\ket{\phi_s}|^2 - |\braket{t_1}{\phi_s(x)}|^2 |\bra{t_2}V_s\ket{\phi_s}|^2 \right| \right) \\
  &= \sum_{s, t_1} \left(|\alpha_s|^2 \left||\braket{t_1}{\phi_s}|^2 - |\braket{t_1}{\phi_s(x)}|^2 \right| \right) \\
  &\leq \sum_{s, t_1} \left(\left| |\alpha_s|^2 |\braket{t_1}{\phi_s}|^2 - |\beta_s|^2 |\braket{t_1}{\phi_s(x)}|^2 \right| \right) + \sum_{s, t_1}\left(\left||\alpha_s|^2 |\braket{t_1}{\phi_s(x)}|^2 - |\beta_s|^2 |\braket{t_1}{\phi_s(x)}|^2 \right|\right) \\
  &= \sum_{s, t_1} \left(\left| |\alpha_s|^2 |\braket{t_1}{\phi_s}|^2 - |\beta_s|^2 |\braket{t_1}{\phi_s(x)}|^2 \right| \right) + \sum_{s}\left(\left||\alpha_s|^2 - |\beta_s|^2 \right|\right) \\
  &\leq 2\norm{\ket{\phi} \bra{\phi} - \ket{\phi(x)} \bra{\phi(x)}}_{tr} \\
  &\leq 4\norm{\ket{\phi(x)} - \ket{\phi}}_2.
\end{align*}
Finally, 
\begin{align*}
S_3 &= \sum_{s, t_1, t_2}\left(|\alpha_s|^2 \left||\braket{t_1}{\phi_s(x)}|^2 |\bra{t_2}V_s\ket{\phi_s}|^2 - |\braket{t_1}{\phi_s(x)}|^2 |\bra{t_2}V_s\ket{\phi_s(x)}|^2 \right| \right) \\
&= \sum_{s, t_2} \left(|\alpha_s|^2 \left||\bra{t_2}V_s\ket{\phi_s}|^2 - |\bra{t_2}V_s\ket{\phi_s(x)}|^2 \right| \right) \\
&\leq \sum_{s, t_2} \left(\left||\alpha_s|^2 |\bra{t_2}V_s\ket{\phi_s}|^2 - |\beta_s|^2 |\bra{t_2}V_s\ket{\phi_s(x)}|^2 \right| \right) \\
&\qquad + \sum_{s, t_2} \left(\left||\alpha_s|^2 |\bra{t_2}V_s\ket{\phi_s(x)}|^2 - |\beta_s|^2 |\bra{t_2}V_s\ket{\phi_s(x)}|^2 \right| \right) \\
&= \sum_{s, t_2} \left(\left||\alpha_s|^2 |\bra{t_2}V_s\ket{\phi_s}|^2 - |\beta_s|^2 |\bra{t_2}V_s\ket{\phi_s(x)}|^2 \right| \right) + \sum_{s} \left(\left||\alpha_s|^2 - |\beta_s|^2 \right| \right) \\
&\leq 2\norm{\ket{\phi} \bra{\phi} - \ket{\phi(x)} \bra{\phi(x)}}_{tr} \\
&= 4\norm{\ket{\phi(x)} - \ket{\phi}}_2
\end{align*}
Therefore, \[d_{x,i} \leq \frac{1}{2}(S_1 + S_2 + S_3) \leq 5 \norm{\ket{\phi(x)} - \ket{\phi}}_2.\]
\par On the other hand, if $U_i$ is a query to $f$, then it only applies a local phase of $-1$ to some of the probability amplitudes of $\ket{\phi}$ and $\ket{\phi_x}$. Therefore, the same argument still shows that $d_{x,i} \leq 5 \norm{\ket{\phi(x)} - \ket{\phi}}_2$.
\par By the Cauchy-Schwarz inequality and Lemma~\ref{hybrid},
\begin{align*}
 \frac{1}{N}\sum_{x=0}^{N-1} d_{x, i} &\leq 5 \cdot \frac{1}{N}\sum_{x=0}^{N-1} \norm{\ket{\phi(x)} - \ket{\phi}}_2 \\
 &\leq 5 \sqrt{\frac{1}{N}\sum_{x=0}^{N-1} \norm{\ket{\phi(x)} - \ket{\phi}}^2_2} \\
 &\leq \frac{10Q}{\sqrt{N}}
\end{align*}
for all $i$.  Therefore, there is some $x$ for which
\[
 d_{TV}(v, v(x)) \leq 2 \sum_{i=1}^T d_{x,i}
 \leq \frac{20TQ}{\sqrt{N}}.
\]
On the other hand, $d_{TV}(v, v(x)) \geq \frac{1}{3}$ for all $x$, so \[\frac{20TQ}{\sqrt{N}} \geq \frac{1}{3},\] and the running time of the algorithm is at least $T + Q = \Omega(N^{1/4})$.
\end{proof}

\section{An upper bound on $\PDQP$}

We now show that $\PDQP$ is contained in the class $\BPP^\PP$. This places our class in the second level of the counting hierarchy. By comparison, the best known upper bound for $\BQP$ is $\PP$ \cite{bqppp}.

\begin{theorem} $\PDQP \subseteq \BPP^\PP$. \label{thm:upperbound}

\end{theorem}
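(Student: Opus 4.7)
The plan is to have a $\BPP^\PP$ machine sample the full output of $\mathcal{Q}_P$ from scratch, and then deterministically simulate the base Turing machine on that sample. The enabling fact is standard from the proof of $\BQP \subseteq \PP$: for any polynomial-sized quantum circuit built from our fixed gate set $\mathcal{U}$, the probability that a prefix of the circuit (ending in a computational-basis measurement on some subset of qubits) produces any specified sequence of measurement outcomes is a rational of polynomial bit-length whose numerator is a difference of two $\#\P$ functions. Hence such probabilities can be computed exactly in polynomial time with a $\PP$ oracle by binary search.

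The joint distribution of $\mathcal{Q}_P(C)$ factors as follows. First, sample the intermediate collapsing-measurement outcomes $\vec m = (m_1, \ldots, m_T)$ from their natural joint quantum distribution; this determines the entire pure-state trajectory $\ket{\psi_0}, \ldots, \ket{\psi_T}$. Conditional on $\vec m$, the non-collapsing outcomes $v_0, \ldots, v_T$ are mutually independent, with $v_t$ distributed as $|\langle v | \psi_t \rangle|^2$. The $\BPP^\PP$ machine mirrors this two-stage structure. It first samples $\vec m$ one bit at a time: each conditional bias ``probability that the next bit of $m_i$ equals $0$, given the previously sampled history'' is a ratio of two probabilities of the form above, each computed with a $\PP$ oracle call, and a coin with an exactly computed rational bias of polynomial bit-length can be flipped using ordinary $\BPP$ random bits. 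Having fixed $\vec m$, the machine then independently samples each $v_t$ bit by bit in the same manner, the relevant conditional probabilities now coming from a polynomial-sized circuit obtained by running $C$ up through $\ket{\psi_t}$ and adjoining an additional partial computational-basis measurement gadget.

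Once $(v_0, \ldots, v_T)$ has been drawn from exactly the distribution of $\mathcal{Q}_P(C)$, the simulator runs the deterministic base Turing machine on this tuple and outputs its answer; the overall error probability is therefore inherited from $\PDQP$, i.e.\ at most $1/3$. The one step that requires care is the possibility of exponentially small denominators appearing in the conditional biases, but this is not an actual obstacle: at each stage we only condition on events that have already been realized in the current sample, so the denominators are automatically strictly positive, and exact rational arithmetic at polynomial bit-length poses no problem for a $\PP$-oracle machine. Checking that every probability we need is indeed computable by the $\PP$ oracle (including handling the signs of amplitudes coming from non-positive gates such as Hadamard and $T$) is routine and follows directly from the standard $\BQP \subseteq \PP$ argument applied to the appropriate polynomial-sized prefix circuit.
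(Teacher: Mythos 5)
Your simulation has the same structure as the paper's proof: sample the collapsing-measurement record bit by bit using exactly computed conditional probabilities from a $\#\P$/$\PP$ oracle, then (conditioned on that record) sample each non-collapsing outcome $v_t$ the same way, and finally run the deterministic postprocessing — so in outline you and the paper agree. The one step you treat as "routine" is actually where the paper does real work: your claim that for \emph{any} fixed universal gate set $\mathcal{U}$ the outcome probabilities are rationals of polynomial bit-length is false in general. With the very gates you mention (e.g.\ $T$ together with Hadamard), amplitudes lie in $\mathbb{Z}[e^{i\pi/4}]/2^{k/2}$ and outcome probabilities lie in $\mathbb{Z}[\sqrt{2}]/2^{k}$, hence are typically irrational, so "compute the bias exactly by binary search with a $\PP$ oracle" does not literally go through, and for an arbitrary gate set (whose entries need not even be algebraic numbers given explicitly) it is unclear what "exact" even means. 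The paper resolves this by first invoking its gate-set–independence result (its Appendix on universal gate sets, which is itself nontrivial for $\PDQP$: it needs the Markov-chain total-variation lemma plus Solovay–Kitaev, since changing gates perturbs every intermediate state and hence the whole history distribution) together with Shi's theorem, so that one may assume the circuit uses only Hadamard and Toffoli; then every probability is a dyadic rational of the form (difference of $\#\P$ functions)$/2^{k}$, exactly computable and exactly realizable by fair coin flips. Your argument becomes correct once you insert this reduction (or otherwise justify exact/sufficiently precise computation of the biases for the given gate set); as written, that justification is the missing piece rather than a routine check.
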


\begin{proof}

First note that $\BPP^\PP = \BPP^{\#\P}$, because one can always use a $\PP$ oracle to count with only polynomial overhead. Therefore it suffices to show $\PDQP \subseteq \BPP^{\#\P}$. We now show how to simulate the sampling oracle $\mathcal{Q}_P$ in $\BPP^{\#P}$; since $\PDQP=\BPP^{\mathcal{Q}_P,1}$, this implies the claim.

Suppose we wish to simulate a sample from the oracle $\mathcal{Q}_P$ with input circuit $C=(U_1,M_1,\ldots U_T,M_T)$ on $n$ qubits. Since the choice of gate set does not matter (Appendix ~\ref{gateset}), without loss of generality we can assume our circuit is composed of only Toffoli and Hadamard gates, which are universal by a result of Shi \cite{Shi2003}.

We first simulate the result of the measurement $M_1$. Suppose without loss of generality that $M_1$ measures the first $k$ qubits and gets outcome $x_1 \ldots x_k \in \{0,1\}^k$. Following the techniques of Adleman, DeMarrais, and Huang \cite{bqppp}, we can write the probability that $x_1$ is 0 or 1 as an exponential sum of poly-time-computable terms (since $U_1$ is specified by a poly-sized circuit). Since we chose Hadamard and Toffoli as our gate set, all terms in the sum are of the form $\frac{\pm 1}{2^{k}}$, where $k$ is the number of Hadamard gates in $U_1$. Hence using the $\#P$ oracle, we can compute $\mathrm{Pr}[x_1=1]$ exactly in binary, and then flip a coin with bias $p$ using the base $\BPP$ machine to obtain outcome $x_1 \in {0,1}$ with this probability. 

We've now sampled the value of $x_1$. To sample the value of $x_2$, note that we can also express $\Pr[x_2=1 | x_1=0]$ as a sum of exponentially many terms, each of which is poly-time computable and takes values in $\frac{\pm 1}{2^k}$. Therefore using the $\#\P$ oracle, we can exactly compute the \emph{conditional} probability that $x_2=1$ given our sampled value of $x_1$; in other words the $\#\P$ oracle can compute the probabilities of measurement outcomes under post-selection. In this way we can sample $x_2$, then $x_3$, etc. obtain a sample $x_1\ldots x_k \in \{0,1\}^k$ as desired.

Now suppose we wish to sample the variable $v_1 \in \{0,1\}^n$ which is the result of a hidden measurement on the state remaining after measurement $M_1$ yields value $x_1\ldots x_k$. As noted above, using the $\#\P$ oracle, we can compute the marginal probability that any qubit is 1, postselected on a particular measurement outcome. Hence using the $\#\P$ oracle, we can draw the sample $v_1$ using $n$ queries to the oracle. We can continue this process to simulate $M_2$, then sample $v_2$, etc. Therefore l we can draw a sample from $\mathcal{Q}_P$ using $O(nT)$ queries to the $\#\P$ oracle.

\end{proof}

An open question is whether or not we can improve this upper bound to show $\PDQP \subseteq \PP$. One promising approach to doing so is to use the fact that $\PP=\PostBQP$ \cite{postbqp}, and design a post-selected quantum circuit to simulate the oracle $\mathcal{Q}_P$. However, the most naive way of trying to do this fails. Suppose that one tried the following: to simulate the oracle's output under $C=(U_1,M_1, \ldots U_T, M_T)$ on $n$ qubits, create a post-selected circuit $C'$ on $nT$ qubits which runs $U_1M_1$ on the first $n$ qubits, $U_1M_1U_2M_2$ on the second $n$ qubits, etc, and post-selects on them receiving the same outcomes for the intermediate measurements. While this superficially looks like what the oracle $\mathcal{Q}_P$ performs, this approach does not sample from the correct distribution on outputs. Suppose the probability that the outcome of $M_1$ is $1$ is $p$. Then the probability one sees $M_1=1$ in the final output of $C'$ will be $\frac{p^T}{p^T + (1-p)^T}$, while the quantum oracle $\mathcal{Q}_P$ will sample $M_1=1$ with probability $p$. For this reason it seems difficult to generate a sample from $\mathcal{Q}_P$ with a post-selected circuit, and hence difficult to place $\PDQP$ in $\PP$.

\section{Open questions for further research}
\par We leave many questions about the complexity classes $\DQP$ and $\PDQP$ unanswered.
\begin{enumerate}
 
 \item We demonstrated a $\tilde O(N^{1/3})$-time algorithm for the search problem in the $\PDQP$ model, as well as the result that any search algorithm takes $\Omega(N^{1/4})$ time. Is it possible to close the gap between these two bounds? If we disallow intermediate collapsing measurements, then we can prove an $N^{1/3}$ lower bound for search (a proof is included in Appendix~\ref{app:search13}). However proving an $N^{1/3}$ lower bound when there are intermediate measurements remains open.
 \item Can we demonstrate a lower bound, superpolynomial in $\log N$, for the running time of a search algorithm in the $\DQP$ model? The proof given in~\cite{hidden} of an $\Omega(N^{1/3})$ lower bound is flawed (as discussed in Appendix~\ref{errors}).
 \item Is there a hierarchy of computational models for which the $k$th allows searching in $\tilde O(N^{1/k})$ time?
 \item Can we improve the upper bound $\PDQP \subseteq \BPP^\PP$ to $\PDQP \subseteq \PP$? One possible way to approach this problem is to use the alternative formulation of $\PP$ as $\PostBQP$~\cite{postbqp}, however a straightforward application of this result does not seem to work.

\item  What is the power of quantum computers which have the ability to clone quantum states? Such devices could clearly simulate computations in $\PDQP$ - to simulate a non-collapsing measurement, simply clone the state and measure in the computational basis - but may be more powerful than $\PDQP$.


\end{enumerate}

\section{Acknowledgements}
S.A. was supported in part by an Alan T. Waterman Award. A.B. was supported in part by the National Science Foundation Graduate Research Fellowship under Grant No. 1122374 and by the Center for Science of Information (CSoI), an NSF Science and Technology Center, under grant agreement CCF-0939370. J.F. was supported in part by the Singapore National Research Foundation under NRF Award No. NRF-NRFF2013-01. M.L. was supported by the MIT SPUR program.

\begin{appendices}

\section{The error in the $\DQP$ search time lower bound, and a roadmap for correcting it} \label{errors}
\par

We now describe the error in Aaronson's original proof of an $\Omega(N^{1/3})$ lower bound for search in the $\DQP$ model, which is related to the fact that hidden variable theories can have strong correlations between their values at different times.

\subsection{The class $\DQP$}
\par We first describe the formal definition of the complexity class $\DQP$, which is based on the notion of a hidden-variable theory. A hidden-variable theory is an interpretation of quantum mechanics in which a quantum system is described by both a state vector and a definite state (called the ``hidden variable''), which determines the result of measurements on the system. When a transformation is applied to the system, the state vector evolves by a unitary linear transformation, like in ordinary quantum mechanics, and the hidden variable evolves stochastically according to the state vector and the unitary linear transformation. According to the Kochen-Specker theorem~\cite{kochenspecker}, it is impossible for the hidden variable to determine a result for all possible measurements on the system. Therefore, in what follows, we will only ever measure the quantum system in some fixed basis.
\par Suppose that our quantum system is described by a Hilbert space with $N$ basis states $\ket{1}, \cdots, \ket{N}$. Then, the hidden variable has one of the values $1, \cdots, N$. The hidden-variable theory specifies the probabilities that the hidden variable changes from $i$ to $j$ given that the state was $\ket{\psi}$ and was transformed by the unitary $U$. More precisely, a hidden variable theory $\mathcal{T}$ is specified by a stochastic matrix $S_{\mathcal{T}}(\ket{\psi}, U)$ for every state $\ket{\psi}$ and unitary transformation $U$ of dimension $N$, which indicates how the hidden variable evolves when the state transforms from $\ket{\psi}$ to $U\ket{\psi}$. If $\mathcal{T}$ is understood from context, then we simply write $S(\ket{\psi}, U)$. Suppose $\ket{\psi}=\Sigma_i \alpha_i \ket{i}$ and $U\ket{\psi} = \Sigma_j \beta_j \ket{j}$. The hidden-variable theory must be consistent with the predictions of quantum mechanics, which is to say that the probability that the hidden variable is equal to $i$ is equal to $|\alpha_{i}|^2$. This means that the stochastic matrix $S = S(\ket{\psi}, U)$ must satisfy \[|\beta_{j}|^2 = \sum_{i=1}^n |\alpha_i|^2 (S)_{ij}.\] Other ``reasonable'' properties that we might expect a hidden-variable theory to have, for example that \[S(\ket{\psi}, WV) = S(\ket{\psi}, V) S(V\ket{\psi}, W),\] need not be satisfied.
\par Sometimes, the hidden-variable theory is described instead by the matrix $P = P(\ket{\psi}, U)$ of joint probabilities, defined by $(P)_{ij} = |\alpha_i|^2(S)_{ij}$. The matrix $S$ is then recovered by \[S(\ket{\psi}, U) = \lim_{\epsilon \to 0^+} \frac{(P(\ket{\psi_\epsilon}, U))_{ij}}{|(\ket{\psi_\epsilon})_{i}|^2}\] where $\ket{\psi_\epsilon} = \sqrt{1-\epsilon}\ket{\psi} + \sqrt{\epsilon} \frac{1}{2^{N/2}}\Sigma_i\ket{i}$. The function $P(\ket{\psi}, U)$ only defines a hidden-variable theory if this limit actually exists.
\par The hidden-variable theory is called \textit{local} if unitary transformations on some subsystem $A$ of the system do not affect the value of the hidden variable on a separate subsystem $B$. A stronger property is \textit{indifference}, which is the property that if $U$ is block-diagonal, then $S(\ket{\psi}, U)$ is block-diagonal with the same block structure or some refinement thereof. It is called \textit{commutative} if the order of unitaries applied to separate subsystems is irrelevant. A theorem of Bell states that no hidden-variable theory satisfies both locality and commutativity. The theory is called \textit{robust} if for every polynomial $q(N)$, there is a polynomial $p(N)$ such that perturbing the unitary $U$ and density matrix $\ket{\psi}$ by at most $\frac{1}{p(N)}$ in the infinity norm changes the matrix $P(\ket{\psi}, U)$ by at most $\frac{1}{q(N)}$ in the infinity norm. An example of a robust indifferent hidden variable theory is the flow theory $\mathcal{FT}$ defined in~\cite{hidden}, which is based on 
network flows. For a more detailed treatment of hidden variable theories, see \cite{hidden}.

\par The complexity class $\DQP$ (Dynamical Quantum Polynomial Time) is the class of all problems solvable efficiently in the dynamic quantum model of computation. The basic idea is that a dynamic quantum algorithm is allowed to see the whole history of a hidden variable through some quantum computation (and postprocess it classically), as opposed to a quantum algorithm which can only see the final value of the hidden variable. 
\par More formally, suppose that $U_1, \cdots, U_T$ are unitary transformations on $\ell$ qubits, each specified by a sequence of gates from some finite universal gate set $\mathcal{U}$. Then, a \textit{history} of the hidden variable is a sequence $(v_0, \cdots, v_T)$ of computational basis states, with $v_0 = \ket{0}^{\otimes \ell}$. For any hidden-variable theory $\mathcal{T}$, the rule \[\Pr[v = (v_0, \cdots, v_T)] = \prod_{k=0}^{T-1} (S_{\mathcal{T}}(U_{k} \cdots U_1 \ket{0}^{\otimes \ell}, U_{k+1}))_{v_k v_{k+1}}\] defines a Markov distribution on histories. The oracle $\mathcal{O}(\mathcal{T})$ takes as input the unitaries $(U_1, \cdots, U_T)$, specified by sequences of gates from $\mathcal{U}$, and outputs a sample from this distribution.
\par Now, we are ready to define the complexity class $\DQP$. The computational model is a deterministic classical polynomial-time Turing machine $A$ that is allowed one oracle query to $\mathcal{O}(\mathcal{T})$. A language $L$ is in $\DQP$ if there is such a Turing machine $A$, such that for \emph{any} robust indifferent hidden-variable theory $\mathcal{T}$, the machine $A$ correctly decides, with probability at least $2/3$, whether a string of length $n$ is in $L$, for all sufficiently large $n$. It follows from the principle of deferred measurement that $\DQP \supset \BQP$, because viewing the entire history of a quantum system is at least as powerful as observing it only at the end of a computation~\cite{hidden}. It is important that there is one machine $A$ that works for all robust indifferent hidden-variable theories $\mathcal{T}$.

\subsection{The error}

We now describe the error in Aaronson's proof that any algorithm for the search problem in $\DQP$ takes at least $\Omega(N^{1/3})$ time. His proof is based on the hybrid argument: it shows that changing the marked item from $x$ to $x^*$ does not affect the distribution of any particular entry $v_i$ of the hidden-variable history by very much (in the total variation distance). This part of the proof is correct. However, from there he claims that this implies the total variation distance between the entire hidden variable histories $v, w$ is small, using the following inequality

\[d_{TV}(v, w) \leq  \sum_{i=0}^{T} d_{TV}(v_i, w_i).\]

While this inequality looks quite similar to Lemma~\ref{union} of our paper, it is false. The reason is that correlations between the $v_i$'s in a Markov chain can cause the total variation distance between the Markov chains to be high, while the total variation distance between the marginals is small. A specific counterexample is $T = 1$, where $v$ is $(0, 0)$ with probability $\frac{1}{2}$ and $(1,1)$ with probabilty $\frac{1}{2}$, and $w$ is $(0, 1)$ with probability $\frac{1}{2}$ and $(1,0)$ with probabilty $\frac{1}{2}$. These distributions are perfectly distinguishable, but they have the property that their marginals on any entry are identical (a 50-50 coin flip). Hence \[d_{TV}(v, w) = 1\] for this distribution whereas \[\sum_{i=0}^{T} d_{TV}(v_i, w_i) = 0\]  

Although $d_{TV}(v,w)$ cannot be upper bounded in this way, this sort of argument does show that for some item location, the probability of seeing the marked item in the hidden variable history is upper bounded by $O\left(\frac{Q^2T}{N}\right)$ (this follows from the hybrid argument and the union bound). So any search algorithm in $\DQP$ which is required to see the marked item takes at least $\Omega(N^{1/3})$ time.  However, it is possible that a $\DQP$ algorithm could infer the marked item's presence by observing correlations in the hidden variable history, without ever seeing the marked item itself. This possibility is what breaks the proof.

In order to fix this step in Aaronson's proof, one would have to show that $d_{TV}((v_{i-1}, v_i), (w_{i-1}, w_i))$ is small for each $i$, and then apply Lemma ~\ref{union} of our paper to bound the total variation distance between $v$ and $w$. Furthermore, since a $\DQP$ algorithm is required to work for all indifferent or robust hidden variable theories, one would only need to exhibit a single hidden variable theory in which is this is small. However, we only know of one indifferent and robust hidden varible theory (``flow theory"), and it remains open whether or not it satisfies this property.

\subsection{A proposed roadmap for fixing the error}

One way to fix this lower bound would be to find a hidden variable theory which is extremely robust to small perturbations. By the hybrid argument, we know that for any search algorithm making few queries, there will exist a marked item $x$ for which the state of the system $\ket{\psi^x}$ with the item $x$ present is $\epsilon$-close (where $\epsilon \approxeq \frac{Q}{\sqrt{N}}$) to the state $\ket{\psi}$ without the marked item. 

Call a hidden variable theory \emph{strongly robust} if, for all states $\psi, \phi$ that are $\epsilon$-close, and all $U,U'$ that are $\epsilon$-close,  
\[|P(\psi,U)-P(\phi, U')|_1 \leq \poly (\epsilon) \polylog(N)\]
In other words, perturbing the states only perturbs the joint probability matrices by a small amount, which increases only polynomially in the number of qubits. In contrast, a robust theory is only required to obey $|P(\psi,U)-P(\phi, U')|_1 \leq \poly(\epsilon)\poly(N)$, i.e. the joint probability matrices can be perturbed by an amount which increases polynomially in the dimension of the Hilbert space.

If a strongly robust a theory exists, it would immediately imply a lower bound for search in $\DQP$ which is polynomial in $N$ - the reason is that for this marked item $x$, we would have 
\[|P(\psi,U)-P(\psi^x, U^x)|_1 \leq \poly (\epsilon) \polylog(N) = \poly\left(\frac{Q}{\sqrt{N}}\right)\polylog(N)\]
at all stages of the algorithm, and hence by Lemma ~\ref{union},
\begin{align*} d_{TV}(v, v^x) &\leq \displaystyle\sum_i d_{TV}((v_{i-1}, v_i), (v^x_{i-1}, v^x_i)) 
\\ &= \displaystyle\sum_t |P(\psi^x_t, U^x_t)-P(\psi_t, U_t)|_1
\\ &\leq T\poly\left(\frac{Q}{\sqrt{N}}\right)\polylog(N)
\end{align*}
Since the $\DQP$ search algorithm must work for this strongly robust theory, we must have $\frac{TQ^c\polylog(N)}{N^{c/2}} \geq d_{TV}(v,v^x) \geq \Omega (1)$ for some constant $c$ which is the exponent of the polynomial in $\epsilon$. This implies $T+Q=\tilde \Omega (N^{c/(2+2c)})$. Note that perturbing a state by $\epsilon$ has to perturb the resulting $P$ matrices by at least $\epsilon$ (since it must alter their row sums by $\epsilon$), and hence we must have $0<c\leq 1$. Therefore even if a strongly robust theory exists, the best possible lower bound one could prove using this technique is $N^{1/4}$.

Unfortunately we do not know of any theories which are strongly robust. The only provably robust theory we know of is flow theory, which in \cite{hidden} is shown to obey
\[|P(\psi,U)-P(\psi^x, U^x)|_1 \leq 4\epsilon N^2\]
which does not meet the criteria for strong robustness. An interesting open problem is to determine if flow theory, Sch\"{o}dinger theory (described in \cite{hidden}), or any hidden variable theory is strongly robust.

\section{An $N^{1/4}$ lower bound for search in a modified version of \textsf{DQP}}\label{app:dqpfix}

Although we do not know how to prove a polynomial lower bound for search in $\DQP$, we can show an $N^{1/4}$ lower bound for search in a modified version of $\DQP$, which we describe below:

We first modify the definition of a hidden variable theory. A hidden variable theory is a function $P(\psi, C)$ which depends on 
\begin{enumerate}
\item A quantum state $\psi = \Sigma_i \alpha_i \ket{i}$
\item A quantum circuit $C$ which specifies product of unitary gate elements $g^k$, $k=1\ldots poly(n)$, from some universal gate set $\mathcal{U}$. Note $U=\Pi_k g^k$.
\end{enumerate}

Unlike before, we now allow $P(\psi,C)$ to depend on the circuit generating the unitary $U$, rather than only the unitary itself. The output of $P(\psi, C)$ is a joint probability matrix $P_{ij}$, $i,j=1\ldots N$ which satisfies
\begin{enumerate}
\item $\Sigma_j P_{ij} = |\alpha_i|^2$ where $\psi = \Sigma_i \alpha_i \ket{i}$ 
\item $\Sigma_i P_{ij} = |\beta_j|^2$ where we have $U\psi = \Sigma_j \beta_j \ket{j}$
\end{enumerate}
as before.

We call $B\subseteq [N]$ a \emph{circuit block} for circuit $C = \Pi_k g^k$, where each $g^k$ is a gate from a universal gate set $\mathcal{U}$, if for all $k$, $g^k_{ij}=0$ for all $i\in B, j\notin B$ and $g^k_{ij}=0$ for all $i\notin B$, $j\in B$. In other words, a circuit block $B$ is valid if for all circuit elements $g_k$, indices $i,j$ are in the same block in the unitary $g_k$. The \emph{circuit block structure} of $C$ is a minimal collection of circuit blocks which partition $[N]$. 

In contrast, the block structure of $C$ is the block structure of the resulting unitary. Note that block structure of $C$ is always a refinement of its circuit block structure; if all gates in $C$ have $B$ as a valid block, then the final unitary will have $B$ as a valid block, but the converse is not true. For example, suppose that $C=HH$ on a single qubit. Since $U=HH=I$ the block structure of $C$ is $\{1\},\{2\}$. However the circuit block structure of $C$ is $\{1,2\}$, i.e. the trivial circuit block structure, because the individual circuit elements do not have any block structure.

We call a hidden variable theory \emph{circuit-indifferent} if $P(\psi, C)$'s block structure respects the circuit block structure of $C$. Since the block structure of a unitary $U$ is always a refinement of the circuit block structure of the circuit $C$ producing $U$, an indifferent theory is always circuit-indifferent. Hence the set of circuit-indifferent theories is larger than the set of indifferent theories.

We define a new version of \textsf{DQP}, which we call $\CDQP$ (for ``circuit-indifferent $\DQP$"), as before, except 
\begin{enumerate}
\item We require the algorithms to work for all circuit-indifferent hidden variable theories
\item We no longer require the hidden variable theories to be robust. As a result the definition of our class is gate set dependent. Assume we have all 1 and 2-qubit gates at our disposal.
\item When given access to a search oracle $f:\{0,1\}^n\rightarrow \{0,1\}$, we assume it is a phase oracle, i.e. $\mathcal{O}_f \ket{x}=(-1)^{f(x)}\ket{i}$. This distinction did not matter in the definition of $\DQP$ or $\PDQP$, but it does matter here, because our hidden variable theories depend on the block structure of individual circuit elements, including the oracle.
\end{enumerate}

We can now prove a lower bound for search in this version of $\CDQP$. 

\begin{theorem}
Any algorithm correctly deciding search in $\CDQP$ using $Q$ queries and $T$ time satisfies  $Q+T=\Omega(N^{1/4})$.
\end{theorem}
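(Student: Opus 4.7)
The plan is to exhibit a single circuit-indifferent hidden variable theory $\mathcal{T}_0$ — the \emph{product theory} — under which the joint distribution of consecutive hidden variables has the same structural form as the non-collapsing measurement statistics appearing in the proof of Theorem~\ref{lowerbound}, and then transcribe that proof. Because a $\CDQP$ algorithm must correctly decide search under every circuit-indifferent theory, establishing the lower bound under $\mathcal{T}_0$ alone suffices.

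First I would define $\mathcal{T}_0$. For any circuit $C$ with circuit block structure $\{B_1, B_2, \ldots\}$ and any state $\ket\psi = \sum_i \alpha_i \ket i$ with $U\ket\psi = \sum_j \beta_j \ket j$, set
\[
P(\ket\psi, C)_{ij} \;=\; \frac{|\alpha_i|^2 \, |\beta_j|^2}{p_k} \quad\text{if } i, j \in B_k, \qquad P(\ket\psi, C)_{ij} \;=\; 0 \text{ otherwise},
\]
where $p_k := \sum_{i \in B_k} |\alpha_i|^2 = \sum_{j \in B_k} |\beta_j|^2$ (with the convention $0/0 = 0$; the equality of block row and column masses uses that $U$ restricts to a unitary on each circuit block). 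The row and column sums of $P$ are then precisely the required Born marginals, and $P$ is block-diagonal with the circuit block structure of $C$, so $\mathcal{T}_0$ is a well-defined circuit-indifferent hidden variable theory.

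Next I would put the joint distribution of consecutive hidden variables under $\mathcal{T}_0$ into the form used in Theorem~\ref{lowerbound}. Writing the block decomposition $\ket{\psi_{t-1}} = \sum_k \alpha_k \ket{\phi_k}$ with $\alpha_k \geq 0$ and $\ket{\phi_k}$ a unit vector supported on $B_k$, and setting $V_k := U_t|_{B_k}$ (well-defined by circuit-indifference of $U_t$), one computes
\[
\Pr\bigl[v_{t-1} = i,\; v_t = j\bigr] \;=\; \alpha_k^2 \, |\braket{i}{\phi_k}|^2 \, |\bra{j} V_k \ket{\phi_k}|^2 \qquad\text{for } i,j \in B_k.
\]
This is exactly the form $|\alpha_s|^2 |\braket{t_1}{\phi_s}|^2 |\bra{t_2} V_s \ket{\phi_s}|^2$ appearing in the proof of Theorem~\ref{lowerbound}, now with the computational-basis label $s$ replaced by the block index $k$ and the ``tail'' labels $(t_1, t_2)$ replaced by basis elements of $B_k$. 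The rest of that proof — the three-term triangle decomposition bounding $S_1, S_2, S_3$ by multiples of $\norm{\ket{\psi_{t-1}} - \ket{\psi_{t-1}(x)}}_2$, the handling of query gates by sign changes of amplitudes, the invocation of Lemma~\ref{union} to combine per-step total variation distances, the hybrid bound Lemma~\ref{hybrid} applied to the pure unitary evolution $\ket{\psi_{t-1}}$ (valid here because hidden-variable observations in $\mathcal{T}_0$ never collapse the state vector), and Cauchy--Schwarz — then goes through verbatim, yielding $d_{TV}(v, v(x)) \leq 20 T Q / \sqrt{N}$ for some $x$ and hence $T + Q = \Omega(N^{1/4})$.

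The main obstacle is purely bookkeeping: one must check that $\mathcal{T}_0$ is a valid circuit-indifferent hidden variable theory (handling $p_k = 0$ consistently) and confirm that the three-sum decomposition from Theorem~\ref{lowerbound} slots cleanly into the block-indexed setting, with the normalization identities $\sum_{i \in B_k} |\braket{i}{\phi_k}|^2 = \sum_{j \in B_k} |\bra{j} V_k \ket{\phi_k}|^2 = 1$ playing the role of $\sum_{t_1, t_2} |\braket{t_1}{\phi_s}|^2 |\bra{t_2} V_s \ket{\phi_s}|^2 = 1$ in the $\PDQP$ argument. There is no new analytic difficulty; the proof is a structural reduction of $\CDQP$ search (under $\mathcal{T}_0$) to $\PDQP$ search.
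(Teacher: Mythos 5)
Your strategy is in essence the paper's own: your product theory $\mathcal{T}_0$ is exactly the paper's ``Dieks theory for circuit block structure'' (the two formulas agree because the block masses of $\ket\psi$ and $U\ket\psi$ coincide), you correctly note that foiling a single circuit-indifferent theory suffices, and you assemble the lower bound from a per-step total-variation estimate, Lemma~\ref{union}, and the hybrid bound, arriving at $TQ=\Omega(\sqrt N)$ and hence $Q+T=\Omega(N^{1/4})$. Your identification of the pairwise law $\Pr[v_{t-1}=i,\,v_t=j]=p_k\,|\braket{i}{\phi_k}|^2\,|\bra{j}V_k\ket{\phi_k}|^2$ for $i,j\in B_k$ is also correct.

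The gap is the claim that the $S_1,S_2,S_3$ computation of Theorem~\ref{lowerbound} transfers ``verbatim.'' That computation compares the marked and unmarked runs through one and the same restricted unitary $V_s$; in the \PDQP{} proof this is legitimate because each step is a single gate, so a step either contains no query (identical in both runs) or is itself a diagonal phase, which is why the one-line ``sign change'' remark suffices there. In \CDQP{} a step is an entire circuit $C_t$ that may interleave oracle calls with other gates, so $U_t$ and $U_t^x$ differ in a way that cannot be absorbed as a phase on $\ket{\phi_k(x)}$; the marked-run probabilities involve $V_k^x$, not $V_k$, and the resulting $S_3$-type term $\sum_{k,j}|\alpha_k|^2\,\bigl|\,|\bra{j}V_k\ket{\phi_k}|^2-|\bra{j}V_k^x\ket{\phi_k(x)}|^2\,\bigr|$ is not controlled by $\norm{\ket{\psi_{t-1}}-\ket{\psi_{t-1}(x)}}_2$ alone (identical pre-step states with many queries inside one step give a counterexample to the per-step bound as you state it). The repair is exactly the paper's perturbation lemma: bound $|P(\psi_{t-1},C_t)-P(\psi_{t-1}^x,C_t^x)|_1$ using closeness of \emph{both} endpoint pairs, $\norm{\psi_{t-1}-\psi_{t-1}^x}$ and $\norm{\psi_t-\psi_t^x}$ (both supplied by the hybrid argument at consecutive times), never comparing $U_t$ with $U_t^x$; this yields a $3\epsilon$ bound per step and the same final count. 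You should also make explicit the fact the paper flags as crucial and that your computation silently assumes: $C_t$ and $C_t^x$ have the same circuit block structure because the oracle is a diagonal phase gate, so the same blocks $B_k$ may be used for both runs.
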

\begin{proof}

We will describe a circuit-indifferent hidden variable theory, which we call Dieks theory for circuit block structure, which foils any search algorithm $A$ which uses $Q+T = o(N^{1/4})$ time. This contradicts the requirement that $A$ work for all circuit-indifferent hidden variable theories.

Suppose that $A$ generates quantum circuits $C_1 \ldots C_T$ when there is no marked item, and quantum circuits $C_1^x \ldots C_T^x$ when there is a marked item at location $x$. Clearly the circuits $C_t$ and $C_t^x$ differ only in their search oracles. The search oracles are diagonal,  hence $C_t$ and $C_t^x$ have the same circuit block structure $I$. This will be crucial in proving our result.

Let $\psi_t$ be the quantum state after $t$ steps of the algorithm when there is no marked item, and let $\psi_t^x$ be the quantum state after $t$ steps when there is a marked item at location $x$. By the hybrid argument, there exists an item $x$ such that
\begin{align}||\psi_t - \psi_t^x|| \leq \frac{4Q}{\sqrt{N}} \label{eq:hybridcdqp}\end{align}
for all $t=1\ldots T$, where $||\psi_t - \psi_t^x||$ indicates the trace norm.

We will show that if $P(\psi_t, C_{t+1}, t)$ and $P(\psi_t^x, C_{t+1}^x)$ are given by Dieks theory for circuit block structure, then
\begin{align} |P(\psi_t, C_{t+1}) - P(\psi_t^x, C_{t+1}^x)|_1 \leq \frac{12Q}{\sqrt{N}} \label{eq:hybridcdqp2}\end{align}
From this the lower bound will follow, because the trace distance between the hidden variable histories with and without a marked item is upper bounded by
\[\displaystyle\sum_t |P(\psi_t, C_{t+1}) - P(\psi_t^x, C_{t+1}^x)|_1 \leq O\left(\frac{TQ}{\sqrt{N}}\right) \]
by Lemma \ref{union}. The quantity must be $\Omega(1)$ because $A$ distinguishes the presense of a marked item with $\Omega(1)$ probability. Hence we have $TQ=\Omega(N^{1/2})$ so $T+Q=\Omega(N^{1/4})$ as desired.

We now define Dieks theory for circuit block structure. Let $I$ be the circuit block structure of $C$. Let $P:=P(\psi_t, C_{t+1})$ be the joint probability matrix of Dieks theory with block structure $I$. That is, 
\[
P_{ij}= |\alpha_i|^2 \frac{|\beta_j|^2}{\Sigma_{\hat{j}\in B} |\beta_{\hat{j}}|^2}
\]
if $i,j$ are in the same block $B\in I$ and $0$ otherwise. Note $P$ is a valid, circuit indifferent matrix. Indeed the column and row sums are
\begin{align}
\displaystyle\sum_j P_{ij} &= |\alpha_i|^2 \displaystyle\sum_{j\in B} \frac{|\beta_j|^2}{\Sigma_{\hat{j}\in B} |\beta_{\hat{j}}|^2} = |\alpha_i|^2\\ 
\displaystyle\sum_i P_{ij} &= \displaystyle\sum_{i\in B} |\alpha_i|^2 \frac{|\beta_j|^2}{\Sigma_{\hat{j}\in B} |\beta_{\hat{j}}|^2} \\
 &=  |\beta_j|^2\frac{\Sigma_{i\in B} |\alpha_i|^2}{\Sigma_{\hat{j}\in B} |\beta_{\hat{j}}|^2} = |\beta_j^2| \label{eq:dieksvalid}
\end{align}
where in line \ref{eq:dieksvalid} we used the fact that the actual block structure of $U$ is a refinement of the circuit block structure of $C$, hence $U$ restricted to any block $B$ of $I$ is also unitary, and so $\Sigma_{i\in B} |\alpha_i|^2 = \Sigma_{\hat{j}\in B} |\beta_{\hat{j}}|^2$. Hence $P(\psi,C)$ is a valid circuit-indifferent hidden variable theory. 

The following Lemma, combined with the equation \ref{eq:hybridcdqp} and the fact that $C$ and $C^x$ have the same circuit block structure, implies equation \ref{eq:hybridcdqp2}.

\begin{lemma}
Suppose that $||\psi-\psi_x||\leq ||U\psi - U^x \psi^x||\leq \epsilon$ where $U$ ($U^x$) is the unitary produced by circuit $C$ ($C^x$). Furthermore suppose $C$ and $C^x$ have the same circuit block structure. Then if $P$ is given by Dieks theory for circuit block structure, then $|P(\psi,C) - P(\psi^x,C^x)|\leq 3\epsilon$.
\end{lemma}
\begin{proof}

Let $\alpha_i,\alpha_i^x, \beta_i, \beta_i^x$ be defined by $\psi=\Sigma_i \alpha_i \ket{i}$, $\psi^x=\Sigma_i \alpha_i^x \ket{i}$, $U\psi=\Sigma_i \beta_i \ket{i}$, and $U\psi^x=\Sigma_i \beta_i^x \ket{i}$ as usual.

Let $I$ be the circuit block structure of $C$ and $C^x$. By the definition of Dieks theory for circuit indifference, we have that $P:=P(\psi, C)$  and $\hat{P}:=P(\psi^x, C^x)$ are given by
\begin{align*}
P_{ij} = \begin{cases}
|\alpha_i|^2 \frac{|\beta_j|^2}{\Sigma_{\hat{j}\in B} |\beta_{\hat{j}}|^2} & i,j\in B \in I\\
0 & \text{o.w.}
\end{cases}
& & 
\hat{P}_{ij} = \begin{cases}
|\alpha^x_i|^2 \frac{|\beta^x_j|^2}{\Sigma_{\hat{j}\in B} |\beta^x_{\hat{j}}|^2} & i,j\in B \in I\\
0 & \text{o.w.}
\end{cases}
\end{align*}
We can now show $\hat{P}$ is close to $P$ in trace distance. Note that
\begin{align}
|P-\hat{P}|_1 &= \displaystyle\sum_{i,j} \left| |\alpha_i|^2 \frac{|\beta_j|^2}{\Sigma_{\hat{j}\in B} |\beta_{\hat{j}}|^2} - |\alpha^x_i|^2 \frac{|\beta^x_j|^2}{\Sigma_{\hat{j}\in B} |\beta^x_{\hat{j}}|^2}\right| \label{eq:step1} \\
&\leq \displaystyle\sum_B \displaystyle\sum_{i,j\in B} \left| |\alpha_i|^2 \frac{|\beta_j|^2}{\Sigma_{\hat{j}\in B} |\beta_{\hat{j}}|^2} - |\alpha_i|^2 \frac{|\beta^x_j|^2}{\Sigma_{\hat{j}\in B} |\beta^x_{\hat{j}}|^2}\right| + \left|  |\alpha_i|^2 \frac{|\beta^x_j|^2}{\Sigma_{\hat{j}\in B} |\beta^x_{\hat{j}}|^2} - |\alpha^x_i|^2 \frac{|\beta^x_j|^2}{\Sigma_{\hat{j}\in B} |\beta^x_{\hat{j}}|^2}\right| \label{eq:step2}\\
&= \displaystyle\sum_B \displaystyle\sum_{i, j\in B} |\alpha_i|^2 \left|  \frac{|\beta_j|^2}{\Sigma_{\hat{j}\in B} |\beta_{\hat{j}}|^2} - \frac{|\beta^x_j|^2}{\Sigma_{\hat{j}\in B} |\beta^x_{\hat{j}}|^2}\right| + \displaystyle\sum_B \displaystyle\sum_{i,j\in B} \frac{|\beta^x_j|^2}{\Sigma_{\hat{j}\in B} \Sigma_{i\in B} |\beta^x_{\hat{j}}|^2} \left||\alpha_i|^2- |\alpha^x_i|^2\right| \label{eq:step3}\\
&= \displaystyle\sum_B \displaystyle\sum_{j\in B}\displaystyle\sum_{\hat{j}\in B} |\beta_{\hat{j}}|^2 \left|  \frac{|\beta_j|^2}{\Sigma_{\hat{j}\in B} |\beta_{\hat{j}}|^2} - \frac{|\beta^x_j|^2}{\Sigma_{\hat{j}\in B} |\beta^x_{\hat{j}}|^2}\right| + \displaystyle\sum_i \left||\alpha_i|^2- |\alpha^x_i|^2\right| \label{eq:step4}\\
&\leq \displaystyle\sum_B \displaystyle\sum_{j\in B} \left|  |\beta_j|^2 - |\beta^x_j|^2\frac{\Sigma_{\hat{j}\in B} |\beta_{\hat{j}}|^2}{\Sigma_{\hat{j}\in B} |\beta^x_{\hat{j}}|^2}\right| + \epsilon \label{eq:step5} \\
& \leq \displaystyle\sum_B \displaystyle\sum_{j\in B} \left|  |\beta_j|^2 - |\beta_j^x|^2\right| +\left|  |\beta_j^x|^2 - |\beta^x_j|^2\frac{\Sigma_{\hat{j}\in B} |\beta_{\hat{j}}|^2}{\Sigma_{\hat{j}\in B} |\beta^x_{\hat{j}}|^2}\right|+ \epsilon \label{eq:step6} \\
& \leq \epsilon + \displaystyle\sum_B \displaystyle\sum_{j\in B}  |\beta_j^x|^2 \left| 1 - \frac{\Sigma_{\hat{j}\in B} |\beta_{\hat{j}}|^2}{\Sigma_{\hat{j}\in B} |\beta^x_{\hat{j}}|^2}\right|+\epsilon \label{eq:step7} \\
& = \epsilon + \displaystyle\sum_B \left| \displaystyle\sum_{j\in B}  |\beta_j^x|^2 - \displaystyle\sum_{\hat{j}\in B} |\beta_{\hat{j}}|^2 \right|+\epsilon \label{eq:step8} \\
&\leq 3\epsilon \label{eq:step9} 
\end{align}
where line (\ref{eq:step2}) follows from the triangle inequality, line (\ref{eq:step4}) from the fact that $U$ has block structure $I$ so $\Sigma_{i\in B} |\alpha_i|^2 = \Sigma_{j\in B} |\beta_j|^2$ as well as an evalution of the second sum, line (\ref{eq:step5}) from our upper bound on the trace distance of $\psi$ and $\psi_x$, line (\ref{eq:step6}) by the triangle inequality, and lines (\ref{eq:step7}) and (\ref{eq:step9}) by our upper bound on the trace distance of $U\psi$ and $U\psi^x$. This completes the proof.

\end{proof}

Hence Dieks theory for circuit block structure foils any $\CDQP$ algorithm taking less than $N^{1/4}$ time, which completes the proof.

\end{proof}

\section{Strange properties of non-collapsing measurements}\label{bizarreqm}

Here we show why allowing non-collapsing measurements in quantum mechanics allows for faster than light communication, allows for quantum cloning, and renders quantum query complexity and quantum communication complexity meaningless.

To see that non-collapsing measurements allow for faster-than-light communication: suppose two players share an EPR pair, and one player makes a collapsing measurement either in the 0/1 basis or in the +/-  basis. By performing non-collapsing measurements on their half of the state, the second player can tell (with high probability) which basis the first player measured in, and hence receive a signal faster than light.

To see that non-collapsing measurements allow for cloning: given a quantum state $\psi$ on $n$ qubits, one could perform $2^{O(n)}$ non-collapsing measurements to characterize the state using tomography, and then (approximately) reproduce the state. This ``approximate cloning" operation take exponential time for generic states, and is non-unitary. Note, that the class of computations considered in $\PDQP$ cannot perform this operation, even on states of $O(\log(n))$ qubits, since the $\PDQP$ machine cannot perform further quantum computations after receiveing the non-collapsing measurement results. In other words, the quantum circuit in $\PDQP$ cannot depend on the non-collapsing measurement outcomes. An interesting open problem is whether or not allowing the quantum circuit to depend on the non-collapsing measurement results changes the power of the class. Or, more generally, what is the power of quantum computers which are given the ability to clone?

We now explain why with non-collapsing measurements, the quantum query complexity and quantum communication complexity of any function is $1$. Suppose one wishes to evaluate $f(x)$ where $x=x_1 \ldots x_N$. Then one can prepare the superposition $\sum_i \ket{i}\ket{x_i}$ with one query to the oracle, and make $O(N\log N)$ non-collapsing measurements of this state to observe the value of each $x_i$ and compute the function. Similarly, in the context of communication complexity, one player can simply encode their input $x\in\{0,1\}^n$ into the state $\cos \theta_x \ket{0} + \sin \theta_x \ket{1}$ where $\theta_x = \frac{x}{2^n}\frac{\pi}{2}$. By performing roughly $2^n$ non-collapsing measurements, the other player can learn $\theta_x$ and hence $x$, with only one quantum bit of communication.  Note that although these example algorithms use only one query or one qubit of communication, respectively, they use a large number of non-collapsing measurements. For this reason, when we prove lower bounds for $\PDQP$, we lower bound the number of queries \emph{plus} the number of non-collapsing measurements required, rather than the number of queries alone.

\section{Universal gate set does not matter} \label{gateset}

\par We prove that the universal gate set $\mathcal{U}$ used in the definition of $\PDQP$ does not matter. Our proof relies on Lemma~\ref{union} and the Solovay-Kitaev theorem~\cite{solovay} to show that any computation using a particular universal gate set $\mathcal{U}$ can be done using a different gate set $\mathcal{U}'$ in such a way that the distributions of the histories does not change significantly in total variation distance.

To do so, we will first give an alternative definition of $\PDQP$ which will make the proof easier. Our alternative definition is framed in the notation of $\DQP$; for an introduction to this notation please see Appendix \ref{errors}.

\subsection{An alternative definition of $\PDQP$}
\par If $B$ is a partition of $\{0,1\}^\ell$ and $U$ is a unitary operator on $(\mathbb{C}^2)^{\otimes \ell}$, then we say that $U$ respects the block structure $B$ if $U_{ij} = 0$ whenever $i$ and $j$ are in different parts of $B$. If $\ket{\psi}$ is a pure state and $U$ is a unitary that respects the block structure $B$, then the stochastic matrix $S_{\mathcal{PT}_B}(\ket{\psi}, U)$ is formed by applying the ``product theory" $\mathcal{PT}$ separately on each block of $B$. More precisely, let $\sim$ be the equivalence relation on $\{1, \cdots, n\}$ defined by $i \sim j$ if and only if $i$ and $j$ are in the same block of $B$. Let $\ket{\psi}=\Sigma_i \alpha_i\ket{i}$ and $U\ket{\psi}=\Sigma_j \beta_j \ket{j}$. Then, 
\[
(S_{\mathcal{PT}_B}(\ket{\psi}, U))_{ij} = \begin{cases}
\frac{|\beta_j|^2}{\sum_{k \sim j} |\beta_{k}|^2} & \mbox{if $i \sim j$} \\
0 & \mbox{otherwise}
\end{cases}\] where the sum over $k$ ranges over all $k$ with $k \sim j$. 
\par Suppose that $\mathcal{V} = (U_1, \cdots, U_T)$ are unitary operators on $\ell$ qubits, and $\mathcal{B} = (B_1, \cdots, B_T)$ are partitions of $\{0, \cdots, 1\}^n$ such that for every $i$, $B_{i+1}$ is a refinement of $B_i$, and $U_i$ respects the block structure $B_i$. Then they define a probability distribution $\Omega = \Omega_{\mathcal{PT}}(\mathcal{V}, \mathcal{B})$ over hidden variable histories $v = (v_0, \cdots, v_T)$ by \[\Omega_{(v_0, \cdots, v_T)} = \prod_{k=1}^{T} (S_{\mathcal{PT}_{B_{k}}}(U_{k-1} \cdots U_1 \ket{0}^{\otimes \ell}, U_{k}))_{v_{k-1} v_{k}}.\] The oracle $\mathcal{Q}_B$ takes as input the unitaries $U_1, \cdots, U_T$ specified by sequences of gates from some finite universal gate set $\mathcal{U}$. It also takes as input the partitions $B_1, \cdots, B_T$, specified by polynomial-time computable functions $b_1, \cdots, b_T : \{0, 1\}^\ell \to \{0, 1\}^m$ satisfying the property that $x$ and $y$ are in the same part of the partition $B_i$ if and only if $b_i(x) = b_i(y)$. It outputs a sample from the distribution $\Omega_{\mathcal{PT}}(\mathcal{V}, \mathcal{B})$. Then, let $\PDQP'$ be the class of all languages that can be recognized by a polynomial-time Turing machine with one query to $\mathcal{Q}_B$, with error probability at most $\frac{1}{3}$.
\begin{lemma}
\label{alternative}
 $\PDQP' = \PDQP$.
\end{lemma}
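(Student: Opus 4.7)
The plan is to show both inclusions $\PDQP \subseteq \PDQP'$ and $\PDQP' \subseteq \PDQP$ by direct polynomial-size simulations. In each direction the construction preserves polynomial running time, so the content is verifying that the two oracles produce the same joint distribution on non-collapsing samples.

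For $\PDQP \subseteq \PDQP'$ I would defer each collapsing measurement. Given a $\PDQP$ circuit $(U_1,M_1,\ldots,U_T,M_T)$ on $\ell$ qubits, introduce a fresh ancilla register for each $M_t$ and replace $M_t$ by a coherent CNOT $C_t$ copying the measured data qubits onto it. The $\PDQP'$ input consists of the $2T$ unitaries $\tilde U_1,C_1,\tilde U_2,C_2,\ldots$ (with $\tilde U_t=U_t\otimes I$) together with partitions chosen as follows: the partition associated to both $\tilde U_t$ and $C_t$ is the partition of basis states by the ancillas already frozen for $M_1,\ldots,M_{t-1}$, and the refinement to include the $M_t$-ancilla only takes effect at the following step $\tilde U_{t+1}$. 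This staggering is essential, since $C_t$ is precisely what flips the $M_t$-ancilla and cannot respect a partition that already mentions it, whereas neither $\tilde U_t$ nor $C_t$ touches any previously-frozen ancilla and so both respect the stated partitions; refinement along the sequence is automatic. Restricting the coherent state at time $2t$ to a block of fixed ancilla values reproduces exactly the post-measurement data state for the corresponding measurement history, so $\mathcal{PT}$-sampling on this block structure reproduces the joint law of $\mathcal{Q}_P$, and the base machine outputs $v_t:=w_{2t}|_{\text{data}}$.

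For $\PDQP' \subseteq \PDQP$ I would realize the block structure by honest collapsing measurements. Given $(U_1,\ldots,U_T)$ and $(B_1,\ldots,B_T)$ with polynomial-time block-indicator functions $b_1,\ldots,b_T$, build the $\PDQP$ circuit with the same unitaries and with $M_t$ defined to coherently compute $b_{t+1}$ onto a fresh ancilla register and measure that register in the computational basis (take $M_T$ empty). The inductive invariant is that after $U_tM_t$ the state is supported on a single $B_{t+1}$-block $B^{(t)}$, with $B^{(t)}\subseteq B^{(t-1)}$; this holds because $B_{t+1}$ refines $B_t$ and $U_t$, respecting $B_t$, maps the $B_t$-block $B^{(t-1)}$ into itself, so the subsequent collapse can only land in a sub-block. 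Consequently the non-collapsing sample $v_t$ lies in $B^{(t)}$, and in particular $v_{t-1}$ and $v_t$ lie in the common $B_t$-block $B^{(t-1)}$, matching the support constraint of $S_{\mathcal{PT}_{B_t}}$.

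The main technical step, and where the writeup needs care, is checking that the conditional law of $v_t$ given $v_{t-1}$ under this simulation equals exactly the transition entry $S_{\mathcal{PT}_{B_t}}(\psi_{t-1},U_t)_{v_{t-1},v_t}$. I would expand $\Pr[v_t \mid v_{t-1}, B^{(t-1)}]$ as a sum over the random collapse outcome $B^{(t)}$ of the $b_{t+1}$-measurement, use that $U_t$ preserves $B_t$-block masses to cancel normalization ratios, and observe that only the factor $|\alpha_{v_t}|^2\,\mathbb{1}[v_t\in B^{(t-1)}]/m$ survives, where $\alpha_j$ is the amplitude of $\ket{j}$ in $U_t\cdots U_1\ket{0}$ and $m$ is the mass of $B^{(t-1)}$ in that state. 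This is exactly $S_{\mathcal{PT}_{B_t}}$ once one notes that $B^{(t-1)}$ must equal $v_{t-1}$'s $B_t$-block (since $v_{t-1}\in B^{(t-1)}$ and $B^{(t-1)}$ is a $B_t$-block). The only delicate parts I anticipate are keeping the partition-refinement indexing straight in direction 1 and executing this marginalization cleanly in direction 2.
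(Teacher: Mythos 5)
Your proposal is correct and follows essentially the same route as the paper's proof: one direction realizes the block structures by coherently computing the block-label functions $b_i$ into fresh ancilla registers and measuring them, and the other defers/copies the collapsing measurements and uses the partitions induced by the recorded outcomes, exactly as the paper does (the paper folds your explicit copy gates into the WLOG assumption that measured qubits are never modified again). Your added care with the staggered partition refinement and the explicit check that the conditional law matches $S_{\mathcal{PT}_{B_t}}$ is sound and simply spells out what the paper attributes to the principle of deferred measurement.
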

\begin{proof}
\par We first demonstrate a procedure for converting oracle queries to $\mathcal{Q}_B$ to oracle queries to $\mathcal{Q}_P$. Suppose that $B_1, \cdots, B_T$ are specified by polynomial-time computable functions $b_1, \cdots, b_T : \{0, 1\}^\ell \to \{0, 1\}^m$ (so that $x, y$ are in the same part of the partition $B_i$ if and only if $b_i(x) = b_i(y)$). Now, add an extra $T$ registers of $m$ qubits each, which start in the state $\ket{0\cdots0}$. Create a quantum circuit with the same unitary operators $U_1, \cdots, U_T$, but before applying the unitary $U_i$, apply a unitary that writes the value $\ket{b_i(x)}$ to the $i$th register when the first $\ell$ qubits are $\ket{x}$. Then measure the $i$th register. The effect is that the non-collapsing measurement results will never jump from one part of $B_i$ to a different part, which is exactly what is desired.
\par To convert a query $C = (U_1, M_1, \cdots, U_T, M_T)$ to $\mathcal{Q}_P$ to a query to $\mathcal{Q}_B$, we first assume, as in the proof of Theorem~\ref{lowerbound}, that measured qubits are never modified again. Keep the unitaries $U_1, \cdots, U_T$ and let $B_i$ be the partition of $\{0,1\}^\ell$ induced by the measurements $M_1, \cdots, M_{i-1}$. By the principle of deferred measurement, $\Omega_{\mathcal{V}, \mathcal{B}}$ is the same distribution that we would have seen had we queried $\mathcal{Q}_P$ instead.
\end{proof}

Now that we have given an alternative definition of $\PDQP$, we can easily show that the choice of gate set does not matter:

\begin{theorem}
 Any universal gate set $\mathcal{U}$ yields the same complexity class $\PDQP$.
\end{theorem}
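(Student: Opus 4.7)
The plan is to use the alternative formulation $\PDQP'$ from Lemma~\ref{alternative}, in which hidden-variable histories are governed by product theory, and to exploit the robustness of product theory under small perturbations of the state and the unitary. Given a $\PDQP$ algorithm using gate set $\mathcal{U}$, translate the query to a $\PDQP'$ query $(\mathcal{V}, \mathcal{B})$ with $\mathcal{V} = (U_1, \ldots, U_T)$, where each $U_i$ respects the block structure $B_i$ induced by prior measurements. By the Solovay--Kitaev theorem~\cite{solovay}, each gate in each $U_i$ can be $\delta$-approximated by a sequence of $O(\mathrm{polylog}(1/\delta))$ gates from $\mathcal{U}'$, producing approximating unitaries $U_i'$ with $\|U_i - U_i'\| \leq \delta$ in operator norm; for $\delta = 1/q(T)$ with $q$ a suitable polynomial, the resulting $\mathcal{U}'$-circuit remains polynomial-sized. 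Because the reduction in Lemma~\ref{alternative} arranges that measured qubits are never modified thereafter, we may require the approximating gates to act only on unmeasured qubits, so each $U_i'$ still respects $B_i$.

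The main analytical step is a robustness lemma: if $\norm{\ket{\psi} - \ket{\psi'}}_2 \leq \delta$ and $\|U - U'\| \leq \delta$, with $U$ and $U'$ both respecting a common block structure $B$, then the product-theory joint probability matrices satisfy $|P(\psi,U) - P(\psi',U')|_1 = O(\delta)$. This is essentially the lemma proved for Dieks theory in Appendix~\ref{app:dqpfix}, since product theory is specified by the very same block-wise renormalization formula; the same chain of triangle inequalities goes through nearly verbatim. Propagating errors inductively, the approximate state $\ket{\psi_t'}$ lies within $t\delta$ in 2-norm of the original $\ket{\psi_t}$, so the per-step joint-probability discrepancy is $O(T\delta)$.

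Finally, invoke Lemma~\ref{union} to bound the total variation distance between the full hidden-variable histories by $O(T^2\delta)$. Choosing $\delta$ small enough (e.g.\ $1/T^3$) makes this vanishing, so the acceptance probability of the base Turing machine changes by a negligible amount and the $\mathcal{U}'$-circuit decides the same language with bounded error. Since the argument is symmetric in $\mathcal{U}$ and $\mathcal{U}'$, both inclusions follow and the classes coincide. The main obstacle is the robustness calculation together with preservation of block structure under Solovay--Kitaev approximation; however, the robustness step is an almost verbatim recap of Appendix~\ref{app:dqpfix}, and block-structure preservation is automatic once measured qubits are frozen.
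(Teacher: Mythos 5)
Your proposal is correct and follows essentially the same route as the paper: pass to the alternative formulation $\PDQP'$, approximate each $U_i$ via Solovay--Kitaev, bound the per-step change in the product-theory joint probability matrices, and sum via Lemma~\ref{union}. The only deviation is quantitative: you invoke a dimension-free $O(\delta)$ robustness bound (correctly observing that the Dieks-theory lemma of Appendix~\ref{app:dqpfix} applies verbatim, since the block-wise formulas coincide), so inverse-polynomial gate precision suffices, whereas the paper uses a cruder $2^{2\ell}(\norm{\ket{\psi}-\ket{\psi'}}_{tr}+\norm{U-U'})$ bound and compensates with exponentially small Solovay--Kitaev error at only polylogarithmic cost--both yield the theorem, and your explicit handling of block-structure preservation under the approximation is a welcome detail the paper leaves implicit.
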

\begin{proof}
\par If $A$ is an operator, denote by $\norm{A}$ the maximum value of $\norm{A \ket{\phi}}_2$ over all $\phi$ with $\norm{\ket{\phi}}_2 = 1$.
\begin{lemma}
Suppose that $V_1, \cdots, V_m$ and $V_1', \cdots, V_m'$ are unitary operators. Then,
\[\norm{V_1 \cdots V_m - V_1' \cdots V_m'} \leq \sum_{k=1}^m \norm{V_k - V_k'}.\]
\end{lemma}
\begin{proof}
By induction, it suffices to prove the statement for $m = 2$. We have
\begin{align*}
 \norm{V_1V_2 - V_1' V_2'} &= \max_{\norm{\ket{\phi}}_2 = 1} \norm{V_1V_2 \ket{\phi}- V_1' V_2'\ket{\phi}}_2 \\
 &\leq \max_{\norm{\ket{\phi}}_2 = 1} (\norm{V_1V_2 \ket{\phi}- V_1' V_2 \ket{\phi}}_2 + \norm{V_1'V_2 \ket{\phi}- V_1 V_2 \ket{\phi}})\\
 &=\max_{\norm{\ket{\phi}}_2 = 1} (\norm{(V_1 - V_1')V_2\ket{\phi}}_2 + \norm{(V_2 - V_2') \ket{\phi}}_2) \\
 &\leq \norm{V_1 - V_1'} + \norm{V_2 - V_2'}.
\end{align*}
\end{proof}
\par If $\ket{\psi}=\Sigma_i \alpha_i \ket{i}$ is a pure state and $U$ is a unitary operator on $\ell$ qubits that respects the block structure $B$, such that $U\ket{\psi}=\Sigma_j \beta_j \ket{j}$, then define the joint probabilities matrix $P_{\mathcal{PT}_B}(\ket{\psi}, U)$ by
 \[
(P_{\mathcal{PT}_B}(\ket{\psi}, U))_{ij} = \begin{cases}
\frac{|\alpha_i|^2 |\beta_{j}|^2}{\sum_{k \sim j} |\beta_{k}|^2} & \mbox{if $i \sim j$} \\
0 & \mbox{otherwise}
\end{cases}.\] It is straightforward to show that \[\norm{P_{\mathcal{PT}_B}(\ket{\psi}, U) - P_{\mathcal{PT}_B}(\ket{\psi'}, U')}_1 \leq 2^{2\ell}(\norm{\ket{\psi} - \ket{\psi'}}_{tr} + \norm{U - U'})\] whenever $\ket{\psi}, \ket{\psi'}$ are state vectors and $U, U'$ are unitary operators. 
\par We use the alternative formulation $\PDQP'$ (Lemma~\ref{alternative}). Suppose that $\mathcal{U}$ and $\mathcal{U}'$ are two universal gate sets, and that $\mathcal{V} = (U_1, \cdots, U_T)$ and $\mathcal{B} = (B_1, \cdots, B_T)$ are a query to the $\mathcal{Q}_B$ oracle, where the operators $U_t$ are specified by sequences of gates from $\mathcal{U}$. It is enough to be able to compute in polynomial time a sequence $\mathcal{V'} = (U_1', \cdots, U_T')$ of unitary operators, specified by sequences of gates from $\mathcal{U}'$, such that \[d_{TV}(\Omega_{\mathcal{PT}}(\mathcal{V}, \mathcal{B}), \Omega_{\mathcal{PT}}(\mathcal{V}', \mathcal{B})) < \frac{1}{8}.\]
\par Let $\epsilon = 2^{-\ell^2 T - 10}$. Then, by the Solovay-Kitaev theorem~\cite{solovay}, it is possible to compute in polynomial time a sequence $\mathcal{V}' = (U_1', \cdots, U_T')$ such that \[\norm{U_t - U_t'} \leq \epsilon\] for all $t$. Suppose that $v = (v_0, \cdots, v_T)$ is sampled from $\Omega_{\mathcal{PT}}(\mathcal{V}, \mathcal{B})$, and that $v' = (v_0', \cdots, v_T')$ is sampled from $\Omega_{\mathcal{PT}}(\mathcal{V}', \mathcal{B})$. Then, \[d_{TV}(\Omega_{\mathcal{PT}}(\mathcal{V}, \mathcal{B}), \Omega_{\mathcal{PT}}(\mathcal{V}', \mathcal{B})) = d_{TV}(v, v').\] By Lemma~\ref{union}, 
\begin{align*}
 d_{TV}(v, v') &\leq 2 \sum_{i=1}^{T} d_{TV}((v_{i-1}, v_i), (v_{i-1}', v_i')) \\
 &= 2 \sum_{i=1}^T \norm{P_{\mathcal{PT}_{B_i}}(U_{i-1} \cdots U_1\ket{0}^{\otimes \ell}, U_i) - P_{\mathcal{PT}_{B_i}}(U_{i-1}' \cdots U_1'\ket{0}^{\otimes \ell}, U_i')}_1 \\
 &\leq 2^{2\ell + 1} \sum_{i=1}^T\left( \norm{U_{i-1} \cdots U_1\ket{0}^{\otimes \ell} - U_{i-1}' \cdots U_1'\ket{0}^{\otimes \ell}}_2 + \norm{U_i - U_i'}\right) \\
 &\leq 2^{2\ell + 1} \sum_{i=1}^T\left(\norm{U_{i-1} \cdots U_1 - U_{i-1}' \cdots U_1'}+ \epsilon\right) \\
 &\leq 2^{2\ell + 1} \sum_{i=1}^T\left(\sum_{k=0}^{i-1} \norm{U_i - U_i'}+ \epsilon\right) \\
 &\leq 2^{2\ell + 1} \sum_{i=1}^T\left(T \epsilon+ \epsilon\right) \\
 &\leq \frac{1}{8},
\end{align*}
as desired.
\end{proof}

\section{An $N^{1/3}$ lower bound for search in $\PDQP$ if there are no collapsing measurements}\label{app:search13}

Assume that intermediate measurements are not allowed in our search algorithm. As we said before, this gives a model with only the power of $\BQP$, because then the states $\ket{\psi_t}=U_t U_{t-1} \ldots U_1 \ket{0}^{\otimes n}$ can be generated with poly-sized circuits, and hence a $\BQP$ machine could prepare and and measure them to sample from $\mathcal{Q}_P$. Trivially one can prove a lower bound of $N^{1/4}$ for search in this model, either by noting that this class can achieve at most quadratic speedups over $\BQP$ by the previous comment, or by using the argument put forth in Theorem \ref{lowerbound}. Here we tighten this result to give an $N^{1/3}$ lower bound for search in this class.

Suppose that an algorithm $A$ searches with $Q$ queries and $T$ timesteps, where $Q+T=o(N^{1/3})$. Let $\psi_t$ be the quantum state after $t$ steps with no marked item, and let $\psi_t^x$ be defined likewise when the marked item is at location $x$. By the hybrid argument we have that $\forall t$
\[ \displaystyle\sum_x||\psi_t - \psi_t^x||_2^2 \leq 4Q^2\]
where $||a||_2^2$ is the 2-norm squared of $a$. This implies
\[\displaystyle\sum_t \displaystyle\sum_x||\psi_t - \psi_t^x||_2^2 \leq 4TQ^2\]
Hence there must exist $x$ such that
\begin{equation}\displaystyle\sum_t ||\psi_t - \psi_t^{x}||_2^2 \leq \frac{4TQ^2}{N}\label{eq:productx}\end{equation}
Since we assumed $Q+T=o(N^{1/3})$, we have that $\frac{4TQ^2}{N}=o(1)$. Therefore for sufficiently large $N$ and for all $t$ we have
\[||\psi_t - \psi_t^{x}||_2^2 \leq 0.01\]

(The choice of constant here is arbitrary, we simply need it to be less than around 0.5.)

Now consider the states $\Psi :=\bigotimes_t \ket{\psi_t}$ and $\Psi^{x} :=\bigotimes_t \ket{\psi^{x}_t}$. Let $V$ the distribution on samples with no marked item, and let $V^x$ be defined likewise. Then clearly we have that
\[|V-V^x|_1 \leq  ||\Psi - \Psi^x||\]
where $||a||$ denotes the trace norm of $a$. This is because the output distributions of $V$ and $V^x$ can be obtained by (independent) measurements on the states $\Psi$ and $\Psi_x$ in the computational basis. Note that $|V-V^x|_1$ must be $\Omega(1)$ in order to distinguish the presence of a marked item at $x$ in postprocessing. Therefore we have
\begin{align}
\Omega(1) \leq |V-V_x|_1 &\leq ||\Psi - \Psi^x||\\
&= \sqrt{1-|\braket{\Psi}{\Psi^x}|^2} \label{eq:product1}\\
&= \sqrt{1-|\Pi_t \braket{\psi_t}{\psi_t^x}|^2} \\
&\leq \sqrt{1-\Pi_t e^{-||\psi_t - \psi_t^x||_2^2}} \label{eq:product2} \\
&= \sqrt{1 - e^{-\Sigma_t ||\psi_t - \psi_t^x||_2^2}} \\
&\leq \sqrt{1 - e^{-\frac{4TQ^2}{N}}} \label{eq:product3}\\
&= o(1) \label{eq:product4}
\end{align}

Where in line \ref{eq:product1} we use the formula for trace distance of pure states, in line \ref{eq:product3} we used equation \ref{eq:productx}, in line \ref{eq:product4} we used the fact that $T+Q=o(N^{1/3})$, and in line \ref{eq:product2} we use the inequality 
\begin{align}
|\braket{\psi_t}{\psi_t^x}| &\geq \mathrm{Re}\left(\braket{\psi_t}{\psi_t^x}\right) \\
&=1-\frac{||\psi_t - \psi_t^x||_2^2}{2} \\
&\geq e^{-||\psi_t - \psi_t^x||_2^2}
\end{align}
where we have use the fact that $1-x \geq e^{-2x}$ for $0\leq x \leq 0.01$. 

Therefore we have shown $\Omega(1)=o(1)$, a contradiction. Hence such an algorithm $A$ cannot exist, so searching takes $Q+T=\Omega(N^{1/3})$ time when there are non-collapsing measurements, but no collapsing measurements, in the model.

\end{appendices}

\end{document}